\providecommand{\U}[1]{\protect\rule{.1in}{.1in}}
\newtheorem{corollary}{Corollary}
\newtheorem{proposition}{Proposition}
\newenvironment{proof}[1][Proof]{\textbf{#1.} }{\  \rule{0.5em}{0.5em}}
\def \@removefromreset#1#2{\let \@tempb \@elt
\def \@tempa#1{@&#1}\expandafter \let \csname @*#1*\endcsname \@tempa
\def \@elt##1{\expandafter \ifx \csname @*##1*\endcsname \@tempa \else
\noexpand \@elt{##1}\fi}     \expandafter \edef \csname cl@#2\endcsname{\csname cl@#2\endcsname}     \let \@elt \@tempb
\expandafter \let \csname @*#1*\endcsname \@undefined}
\tiny\color{gray},
\begin{document}

\title{Nonviolation of the CHSH inequality under local spin-$1$  measurements on two spin  qutrits}
\author{Louis Hanotel$^{1}$ and Elena R. Loubenets$^{1,2,*}$\\
$^{1}$Department of Applied Mathematics, MIEM, HSE University, \\
Moscow 123458, Russia\\
$^{2}$Steklov Mathematical Institute of Russian Academy of Sciences, \\
Moscow 119991, Russia\\\
*Author to whom any correspondence should be addressed $\ \ \ \ $\\
Email: elena.loubenets@hse.ru  $\ \ \ \ \ \ \ \ \ \ \ \ \ \ \ \ \ \ \ \ \ \ \ \ \ \ \ \ \ \ \ \ \ \ \ \ \ \ \ \ $}
\date{}
\maketitle

\begin{abstract}
In the present paper, based on the general analytical expression [arXiv:2412.03470] for the maximum of the CHSH expectation under local Alice and Bob spin-$s$ measurements in a two-qudit state of dimension $d=2s+1$, $s\geq 1/2$, we analyze whether or not, under spin-$1$ measurements in an arbitrary two-qutrit state, the CHSH inequality is violated. We find analytically for a variety of pure nonseparable two-qutrit states and also, numerically for $1,000,000$ randomly generated pure nonseparable two-qutrit states, that, under local Alice and Bob spin-$1$ measurements in each of these nonseparable states, including maximally entangled, the CHSH inequality is not violated. These results  together with the spectral decomposition of a mixed state lead us to the Conjecture that, under local Alice and Bob spin-$1$ measurements, every nonseparable two-qutrit state, pure or mixed, does not violate the CHSH inequality. For a variety of pure two-qutrit states, we further find the values of their concurrence and compare them with the values of their spin-$1$ CHSH parameter, which determines violation or nonviolation by a two-qutrit state of the CHSH inequality under spin-$1$ measurements. This comparison indicates that, in contrast to 
 spin-$\frac{1}{2}$ measurements, where the spin-$\frac{1}{2}$ CHSH parameter of a pure two-qubit state is increasing monotonically with a growth of its entanglement, for a pure two-qutrit state, this is not the case. In particular, for the two-qutrit GHZ state, which is maximally entangled, the spin-$1$ CHSH parameter is equal to $\sqrt{\frac{8}{9}}$, while, for some separable pure two-qutrit states, this parameter can be equal to unity. Moreover, for the two-qutrit Horodecki state, the spin-$1$ CHSH parameter is equal to $4\sqrt{2}/21<1$ regardless of the entanglement type of this mixed state.

\end{abstract}

\section{Introduction}
\label{introduction}
Among a variety of Bell inequalities\footnote{For Bell inequalities, either on correlation functions
or on joint probabilities, see \cite{Lou:08} and references therein.} the Clauser--Horn--Shimony--Holt (CHSH)
inequality
\cite{Cla.Hor.Shi:69} is one of the most applied in different quantum information
processing tasks. The violation of this inequality in the quantum case has been
analyzed in many articles (see \cite{Loub2020} and references
therein) and the following main results are known up to the moment. 

\begin{itemize}
\item A two-qudit state $\rho_{d_{1}\times d_{2}}$, $d_{1},d_{2}\geq2$, on $\mathbb{C}^{d_{1}}\otimes\mathbb{C}^{d_{2}}$ violates the CHSH
inequality iff the maximum $\Upsilon_{chsh}(\rho_{d_{1}\times d_{2}})$
of the absolute value of the quantum expectation:
\begin{align}
\left\langle \mathcal{B}_{chsh}(A_{1},A_{2};B_{1},B_{2})\right\rangle
_{\rho_{d_{1}\times d_{2}}}  & :=\mathrm{tr}[\rho_{d_{1}\times d_{2}%
}\mathcal{B}_{chsh}(A_{1},A_{2};B_{1},B_{2})],\label{chsh_expectation}\\
\mathcal{B}_{chsh}(A_{1},A_{2};B_{1},B_{2})  & =A_{1}\otimes B_{1}%
+A_{1}\otimes B_{2}+A_{2}\otimes B_{1}-A_{2}\otimes B_{2},\label{0_1}%
\end{align}
over \emph{all} Alice and Bob qudit observables $A_{i},B_{j},$ $i,j=1,2,$ with eigenvalues in $[-1,1],$
satisfies the
condition\footnote{In a local hidden variable (LHV) frame,
$\left\vert \left\langle \mathcal{B}_{chsh}(A_{1},A_{2};B_{1},B_{2}%
)\right\rangle _{\rho_{d_{1}\times d_{2}}}\right\vert \leq2$ and specifically
this inequality is referred to in quantum information as the CHSH
inequality. Under the
original derivation of this inequality \cite{Cla.Hor.Shi:69} within an
LHV frame quantum observables have eigenvalues
$\pm1.$}
\begin{equation}
\Upsilon_{chsh} (\rho_{d_{1}\times d_{2}})>2.\label{_0_2}
\end{equation}
For short, we further refer to  (\ref{chsh_expectation}) as the CHSH
expectation in a state $\rho_{d_{1}\times d_{2}}.$
For an arbitrary two-qudit state, the Tsirelson upper bound  \cite{7,8} reads $\Upsilon_{chsh}(\rho_{d_{1}\times d_{2}})\leq2\sqrt2$
and, besides the two-qubit Bell states, is attained \cite{Loub2020} at the maximally entangled pure two-qudit states $\rho_{d\times d}$ of an even dimension $d\geq4$, in particular, at \cite{Loub2020} the two-qudit Greenberger--Horne--Zeilinger state and at  \cite{singlet} the two-qudit singlet state.

\item For a pure two-qudit state $|\psi_{d\times d}\rangle
\langle\psi_{d\times d}|$, the maximum $\Upsilon^{(traceless)}_{chsh}%
(|\psi_{d\times d}\rangle)$ of the absolute value of the CHSH expectation (\ref{chsh_expectation}) over all Alice and Bob traceless qudit observables with eigenvalues in $[-1,1]$ admits the lower bound (see Eqs. (39) and (47) in \cite{Lou.Kuz.Han:24}):
\begin{equation}
\Upsilon^{(traceless)}_{chsh}(|\psi_{d\times d}\rangle)\geq2\sqrt{1+\frac
{1}{(2d-3)^{2}}\mathrm{C}^{2}(|\psi_{d\times d}\rangle)}, \label{0_6}
\end{equation}
where $\mathrm{C}(|\psi_{d\times d} \rangle)$ is the concurrence
\cite{Hil.Woo:97,Run.Buz.etal:01,Che.Alb.Fei:05} of a pure state $|\psi_{d\times d}\rangle$ and the equality holds\footnote{The  equality in a two-qubit case was first proved in
\cite{Ver.Wol:02} for the pure two-qubit state of specific form and
further in \cite{Lou.Kuz.Han:24} for every pure two-qubit state, see remark 1 in \cite{Lou.Kuz.Han:24}.} \cite{Lou.Kuz.Han:24, Ver.Wol:02} for any pure two-qubit state. 
Relation (\ref{0_6}) explicitly indicates that every nonseparable two-qudit state violates the CHSH inequality. This issue was first shown in \cite{1} via the choice for a given nonseparable state  of the specific qudit observables with eigenvalues $\pm1$ for which the absolute value of the CHSH expectation (\ref{chsh_expectation}) is more than two.  Note, however, that observables chosen in \cite{1} and further used in \cite{Lou.Kuz.Han:24} do not constitute  spin-$s$ observables, for the latter type of qudit observables see Section 2. 

\item 
For a state $\rho_{d\times d}$ of two spin qudits of dimension $d=2s+1,\ s\geq \ \frac{1}{2} $, the maximum $\Upsilon_{chsh}^{(spin-s)}%
(\rho_{d\times d})$ of the CHSH expectation over all local Alice and Bob spin-$s$ observables is given
by the following general expression (Theorem  1 in \cite{arXiv}):
\begin{equation}
\Upsilon_{chsh}^{(spin-s)}(\rho_{d\times d})=2\sqrt{z_{s}^{2}(\rho_{d\times
d})+\widetilde{z}_{s}^{2}(\rho_{d\times d})}\ ,\label{0_7}%
\end{equation}
where $z_{s}(\rho_{d\times d})$ and $\widetilde{z}_{s}(\rho_{d\times d})$ are two largest singular values of \emph{the spin-}$s$\emph{
correlation matrix} $\mathcal{Z}_{s}(\rho_{d\times d})$ of a state
$\rho_{d\times d}$, introduced in  \cite{arXiv} and defined via the relation
\begin{equation}
\mathcal{Z}_{s}^{(ij)}(\rho_{d\times d}):=\mathrm{tr}[\rho_{d\times d}%
\{S_{i}\otimes S_{j}\}]\in\mathbb{R},\text{ \ \ }i,j=1,2,3,\label{_0_8}%
\end{equation}
where $S_{i}$, $i=1,2,3,$ are the components of the qudit spin $S=(S_{1},S_{2},S_{3})$ with the eigenvalues $\left\lbrace-s,-(s-1),...,-1,\right.$ $\left.0,1,...,(s-1),s\right\rbrace$ including zero if $d$ is
odd, and $\{-s,-(s-1),...,-\frac{1}{2},\frac{1}{2},...,(s-1),s\}$ if $d$ is
even.
A two-qudit state $\rho_{d\times d}$ violates the CHSH inequality under
local Alice and Bob spin-$s$ measurements iff \emph{ its spin-$s$
CHSH parameter} $\gamma_{s}(\rho_{d\times d})$ satisfies the relation (Corollary 1 in \cite{arXiv}):
\begin{equation}
\gamma_{s}(\rho_{d\times d})=\frac{1}{s^{2}}\sqrt{z_{s}^{2}(\rho_{d\times
d})+\widetilde{z}_{s}^{2}(\rho_{d\times d})}>1\ .\label{0__9}%
\end{equation}

The analytical expression (\ref{0_7}) includes as a particular case the
expression in \cite{Hor.Hor.Hor:95} for the maximum of the CHSH expectation (\ref{chsh_expectation}) under spin-$\frac{1}{2}$ measurements, derived by Horodecki et al. in 1995. The spin-$\frac{1}{2}$
CHSH parameter $\gamma_{s=\frac{1}{2}}(\rho_{2\times2})$ coincides with parameter $M(\rho_{2\times2})$ 
given by Eq. (5) in \cite{Hor.Hor.Hor:95}. Note that, for a pure two-qubit state $|\psi_{2\times2}\rangle$,
\begin{equation}
\gamma_{s=\frac{1}{2}}(|\psi_{2\times2}\rangle
)=M(|\psi_{2\times2}\rangle)=\sqrt{1+\mathrm{C}^{2}(|\psi_{2\times2}\rangle)} \ ,\label{0_5}%
\end{equation}
where $\mathrm{C}(|\psi_{2\times2}\rangle)$ is the concurrence
\cite{Hil.Woo:97,Run.Buz.etal:01,Che.Alb.Fei:05} of this pure state.
\end{itemize}

Recall that, up to a real coefficient, every traceless qubit observable has the
form $\sigma_{n}:=n\cdot\sigma=\sum_{k}n_{k}\sigma_{k}$, where $n$ is a unit
vector in $\mathbb{R}^{3}$ and $\sigma:=(\sigma_{1},\sigma_{2},\sigma_{3})$ and every spin-$1/2$ observable of a spin qubit is given by $\frac{1}%
{2}\sigma_{n}$. Therefore, for a state $\rho
_{2\times2}$ of two spin-$\frac{1}{2}$ qubits, relation (\ref{0__9}) constitutes also the necessary
and sufficient condition for violation of the CHSH inequality under Alice and Bob measurements on traceless qubit observables with eigenvalues in $[-1,1]$.

This is not the case for $d=2s+1>2$  where spin observables are included into the set of all traceless qudit observables only as a particular subset, so that, for a state $\rho_{d\times d}$ of two spin-$s$ qudits with $s\geq1$:
\begin{equation}
\frac{1}{s^{2}}\Upsilon_{chsh}^{(spin-s)}(\rho_{d\times d})<\Upsilon^{(traceless)}_{chsh}(\rho_{d\times d}),
\end{equation}  
where the maximum $\Upsilon_{chsh}^{(spin-s)}(\rho_{d\times d})$ is given by the general analytical expression (\ref{0_7}) whereas finding an explicit general analytical expression for $\Upsilon^{(traceless)}_{chsh}(\rho_{d\times d})$, $s\geq1$, is an open problem\footnote{For the expression of this maximum via the generalized Gell-Mann representation of traceless qudit observables, see \cite{Loub2020}.}. 

Note that spin constitutes an important intrinsic feature of a qudit with dimension $d=2s+1$, $s\geq 1/2$, and the analysis of spin-$s$ measurements in the context of Bell inequalities is now relevant for a variety of problems, including quantum key distribution \cite{qkd}, phase transitions in fermionic systems \cite{ferm}, squeezing of spin states \cite{sq}, and the study of quantum correlations in high-energy physics, see in \cite{LHC, rev} and references therein.  The study of spin-$1$ systems have proven to be valuable for understanding quantum correlations in different decay processes \cite{higgs, BellLHC, vbs, charm, boj, bff}. Spin-$1$ systems have also garnered significant attention over the past decade due to their potential to develop quantum computation beyond qubit-based systems. 
 
However, up to the moment the problem whether or not the CHSH inequality is violated under Alice and Bob high spin $s\geq1$ measurements is still open -- though specifically this Bell inequality has been used for proving \cite{1, Lou.Kuz.Han:24}  nonlocality of every pure nonseparable two-qudit state.

In the present paper, based on the explicit general analytical expressions (\ref{0_7}) and (\ref{0__9}), derived in \cite{arXiv}, we analyze the solution of this problem for spin-$1$ measurements.

We find analytically for a variety of nonseparable two-qutrit states,
pure and mixed, and also, numerically for 1,000,000 randomly generated
nonseparable pure two-qutrit states that \emph{their spin-$1$ CHSH parameter
}$\gamma_{s=1}(\rho_{3\times3})\leq1.$ 
Based on this, we put forward the Conjecture that, under
local Alice and Bob spin-$1$ measurements in an arbitrary nonseparable two-qutrit
state, pure or mixed, the CHSH inequality is not violated. 

For a variety of pure two-qutrit states, we also further find  analytically the values of their  concurrence and compare them with the values of the spin-$1$ CHSH parameter of these states. This comparison indicates that, in contrast to the
situation for a pure two-qubit state, described in Eq.(\ref{0_5}),  in case of a pure two-qutrit state $|\psi_{3\times3}%
\rangle,$ \emph{the spin-$1$ CHSH parameter }$\gamma_{s=1}(|\psi_{3\times
3}\rangle)$ and hence, the maximum $\Upsilon_{chsh}^{(spin-1)}(|\psi_{3\times3}\rangle)$ of the CHSH expectation  under local Alice and Bob spin-$1$
measurements do not, in general, monotonically increase with the growth of entanglement of a pure two-qutrit state. 

The article is organized as follows.

In Section \ref{preliminaries}, we specify the general expressions
(\ref{0_7}) and (\ref{0__9}) for the case of local Alice and Bob spin-$1$ measurements in a two-qutrit state
$\rho_{3\times3}$.

In Section \ref{spin_correlation_matrix_section}, for a variety of two-qutrit states, pure and mixed, we calculate analytically
the values of 
\emph{the spin-$1$ CHSH parameter} and show that, under local Alice and Bob spin-$1$
measurements in either of these states,  the CHSH inequality is not violated.

In Section \ref{entanglement_vs_nonviolation_chsh}, for all pure two-qutrit
states considered in Section \ref{spin_correlation_matrix_section}, we find
the values of \emph{their concurrence} and compare them with the values of
\emph{the spin-$1$ CHSH parameter } of these states. This allows us to show that, in contrast to the situation under spin-$\frac{1}{2}$ measurements in a pure two-qubit state, described by relation (\ref{0_5}), under local
Alice and Bob spin-$1$ measurements in a pure two-qutrit state,
there is no a monotonic dependence of the spin-$1$ CHSH parameter of this
pure state on its entanglement.

In Section 5, based on our analytical results in Section
\ref{spin_correlation_matrix_section} and the numerical results presented in
Appendix B on the calculation of the spin-$1$ CHSH parameter $\gamma_{s=1}(\rho_{3\times3})$ for more than
1,000,000 randomly generated pure nonseparable two-qutrit states, we put forward the
Conjecture that, under local spin-$1$ measurements in an arbitrary two-qutrit
state, pure or mixed, the CHSH inequality is not violated. 

In Section 6, we summarize the main results of the present article.

\section{The CHSH expectation under spin-$1$ measurements in a two-qutrit state}
\label{preliminaries}
For our further analysis in Sections 3--5, let us specify the general results (\ref{0_7}) and (\ref{0__9}) on Alice and Bob spin $s\geq\frac{1}{2}$ measurements, derived in \cite{arXiv}, for the case of spin-$1$ measurements. 

For the spin-$1$ qutrit, any spin-$1$ observable on $\mathbb{C}^{3}$ has the form
\begin{align}
S_{r}  &  =r\cdot S,\text{ \ \ }r\cdot S=\sum_{j=1,2,3}r_{j}S_{j}%
,\label{1_0}\\
r=(r_{1},r_{2},r_{3})  &  \in\mathbb{R}^{3},\text{\ \ \ }\left\Vert
r\right\Vert _{\mathbb{R}^{3}}=1\ ,\nonumber
\end{align}
and constitutes the projection on a direction $r\in\mathbb{R}^{3}$ of the
qutrit spin
\begin{equation}
S=(S_{1},S_{2},S_{3}),\ \text{\ \ }S^{2}=S_{1}^{2}+S_{2}^{2}+S_{3}%
^{2}=2\mathbb{I}_{\mathbb{C}^{3}}, \label{1_1}%
\end{equation}
where 
\begin{align}
\left[  S_{j},S_{k}\right]   &  =i\sum\varepsilon_{jkl}S_{l},\text{
\ \ }j,k,l=1,2,3,\label{1_3}\\
\mathrm{tr}[S_{j}S_{k}]  &  =2\delta_{jk}.\nonumber
\end{align}
Here, $\varepsilon_{jkl}$ is the Levi-Civita symbol and the Hermitian operators $S_i$, $i=1,2,3$, on  $\mathbb{C}^{3}$  are given by the expressions \begin{align}
S_{1}  &  =\frac{1}{\sqrt{2}}\sum_{n=1,2}\left(  |n\rangle\langle n+1|\text{
}+|n+1\rangle\langle n|\right)  ,\label{1_2}\\
S_{2}  &  =-\frac{i}{\sqrt{2}}\sum_{n=1,2}\left(  |n\rangle\langle n+1|\text{
}-|n+1\rangle\langle n|\right)  ,\nonumber\\
S_{3}  &  =|1\rangle\langle1|\text{ }-|3\rangle\langle3|,\nonumber
\end{align}
where $\{|n\rangle,$ $n=1,2,3\}$  is  the computational basis in
$\mathbb{C}^{3}$. In this basis, the matrix representation of any spin-$1$ observable
(\ref{1_0}) is given by
\begin{equation}
S_{r}:=%
\begin{pmatrix}
r_{3} & \frac{r_{1}-ir_{2}}{\sqrt{2}} & 0\\
\frac{r_{1}+ir_{2}}{\sqrt{2}} & 0 & \frac{r_{1}-ir_{2}}{\sqrt{2}}\\
0 & \frac{r_{1}+ir_{2}}{\sqrt{2}} & -r_{3}%
\end{pmatrix}
. \label{1_4}%
\end{equation}
Every spin-$1$ observable (\ref{1_0}) on $\mathbb{C}^{3}$ has the nondegenerate eigenvalues $\{-1,0,1$$\}.$

For Alice and Bob local spin-$1$ observables (\ref{1_0}), the CHSH expectation (\ref{chsh_expectation}) in a state $ \rho_{3\times3}$ is given by 
\begin{equation}
\langle\mathcal{B}_{chsh}(a_{1},a_{2};b_{1},b_{2})\rangle_{\rho_{3\times3}%
}=\text{ }\mathrm{tr}[\rho_{3\times3}\{S_{a_{1}}\otimes(S_{b_{1}} + S_{b_{2}} )\}]
+\mathrm{tr}[\rho_{3\times3}\{S_{a_{2}}\otimes(S_{b_{1}}- S_{b_{2}} )\}]\label{1_6},%
\end{equation}
where $a_{i}, b_{j}$  are unit vectors in $\mathbb{R}^{3}$. 

By specifying for the case of spin-$1$ measurements the general results (\ref{0_7}) and (\ref{0__9}), derived in 
Theorem 1 in \cite{arXiv} and true  for any spin 
 $s\geq\frac{1}{2}$, we
formulate for our further calculations the  following statements on the two-qutrit case.

\begin{proposition}
\label{proposition1}
For an arbitrary two-qutrit state $\rho_{3\times3}$, the maximum $\Upsilon_{chsh}^{(spin-1)}(\rho_{3\times3})$ of the absolute value of the CHSH
expectation (\ref{1_6}) over all Alice and Bob spin observables (\ref{1_0}) is given by
\begin{equation}
\Upsilon_{chsh}^{(spin-1)}(\rho_{3\times3})=\max_{\substack{a_{m},b_{k}%
\in\mathbb{R}^{3},\\\left\Vert a_{m}\right\Vert ,\left\Vert b_{k}\right\Vert
=1}}\left\vert \left\langle \mathcal{B}_{chsh}(a_{1},a_{2};b_{1}%
,b_{2})\right\rangle _{\rho_{3\times3}}\right\vert =2\sqrt{z^{2}(\rho
_{3\times3})+\widetilde{z}^{2}(\rho_{3\times3})}\ ,\label{1_10}%
\end{equation}
where $z(\rho_{3\times3})$ and $\widetilde{z}(\rho_{3\times3})$ are two
largest singular values of the spin-$1$ correlation matrix 
\begin{equation}
\mathcal{Z}^{(ij)}_{s=1}(\rho_{3\times 3}):=\mathrm{tr}[\rho_{3\times3}\{S_{i}\otimes
S_{j}\}]\in\mathbb{R},\text{ \ \ }i,j=1,2,3,\label{1_8}%
\end{equation}
defined via the spin-$1$ components in (\ref{1_2}).
\end{proposition}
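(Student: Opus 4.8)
My plan is to first observe that Proposition~\ref{proposition1} is exactly the $s=1$, $d=2s+1=3$ instance of Theorem~1 in \cite{arXiv}: the spin-$1$ operators $S_1,S_2,S_3$ defined in (\ref{1_2}) satisfy $\mathrm{tr}[S_jS_k]=2\delta_{jk}$ together with the commutation relations (\ref{1_3}) and $S^2=2\mathbb{I}_{\mathbb{C}^3}$, which are precisely the structural properties of the spin components used to establish that general theorem, and formula (\ref{1_10}) is just the specialization of (\ref{0_7}) to $s=1$. So one option is simply to invoke \cite{arXiv}. Because the whole paper rests on (\ref{1_10}), however, I would also include a short self-contained derivation following the Horodecki et al.\ argument \cite{Hor.Hor.Hor:95} adapted to spin observables.

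For that direct derivation I would proceed in four steps. First, rewrite the CHSH expectation (\ref{1_6}) in bilinear form: since $S_{a}=a\cdot S$ and each $S_i\otimes S_j$ is Hermitian, $\mathrm{tr}[\rho_{3\times3}\{S_{a_i}\otimes S_{b_j}\}]=a_i^{\mathsf T}\mathcal{Z}\,b_j$ with $\mathcal{Z}=\mathcal{Z}_{s=1}(\rho_{3\times3})$ the real $3\times3$ matrix (\ref{1_8}), whence
\[
\langle\mathcal{B}_{chsh}(a_1,a_2;b_1,b_2)\rangle_{\rho_{3\times3}}
=a_1^{\mathsf T}\mathcal{Z}(b_1+b_2)+a_2^{\mathsf T}\mathcal{Z}(b_1-b_2),
\]
and maximizing the absolute value is the same as maximizing this expression since $a_m\mapsto-a_m$ flips its sign. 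Second, fix $b_1,b_2$ and optimize over the unit vectors $a_1,a_2$: by Cauchy--Schwarz the maximum equals $\|\mathcal{Z}(b_1+b_2)\|+\|\mathcal{Z}(b_1-b_2)\|$. Third, use the standard Horodecki parametrization of the two remaining unit vectors: by the parallelogram law $b_1+b_2$ and $b_1-b_2$ are orthogonal with $\|b_1+b_2\|^2+\|b_1-b_2\|^2=4$, so $b_1=\cos\theta\,c+\sin\theta\,c'$, $b_2=\cos\theta\,c-\sin\theta\,c'$ for some $\theta\in[0,\pi/2]$ and some orthonormal pair $c,c'$ in $\mathbb{R}^3$, this parametrization reaching every admissible $(b_1,b_2)$; then $b_1\pm b_2$ become $2\cos\theta\,c$ and $2\sin\theta\,c'$, the quantity to maximize is $2(\cos\theta\,\|\mathcal{Z}c\|+\sin\theta\,\|\mathcal{Z}c'\|)$, and maximizing over $\theta$ (the maximiser lies in $[0,\pi/2]$ since both norms are nonnegative) gives $2\sqrt{\|\mathcal{Z}c\|^2+\|\mathcal{Z}c'\|^2}$. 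Fourth, maximize $c^{\mathsf T}\mathcal{Z}^{\mathsf T}\mathcal{Z}c+c'^{\mathsf T}\mathcal{Z}^{\mathsf T}\mathcal{Z}c'$ over orthonormal pairs; by Ky~Fan's maximum principle this equals the sum of the two largest eigenvalues of $\mathcal{Z}^{\mathsf T}\mathcal{Z}$, i.e.\ $z^2(\rho_{3\times3})+\widetilde z^2(\rho_{3\times3})$ with $z,\widetilde z$ the two largest singular values of $\mathcal{Z}$. Combining the four steps yields (\ref{1_10}).

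Since the substantive content is already contained in Theorem~1 of \cite{arXiv}, I do not expect a genuine obstacle here; in the self-contained derivation the only point requiring real care is the third step — checking that the Horodecki parametrization of $(b_1,b_2)$ is both admissible (each $b_m$ a unit vector) and exhaustive, including the degenerate configurations $b_1=\pm b_2$, so that the nested maximizations over $a_1,a_2$, over $\theta$, and over $c,c'$ together reproduce the unconstrained maximum over all four unit vectors. Everything else — the reality of $\mathcal{Z}$, the bilinear rewriting, Cauchy--Schwarz, and the eigenvalue bound — is routine once the spin-$1$ relations (\ref{1_1})--(\ref{1_3}) have been recorded.
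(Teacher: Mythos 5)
Your proposal is correct and takes essentially the same route as the paper, which likewise obtains Proposition~\ref{proposition1} purely by specializing Theorem~1 of \cite{arXiv} to $s=1$, $d=3$, with no independent derivation given in the text. Your supplementary four-step Horodecki-style argument (bilinear rewriting via $a^{\mathsf T}\mathcal{Z}b$, Cauchy--Schwarz over $a_1,a_2$, the orthogonal parametrization of $b_1\pm b_2$, and Ky~Fan for the sum of the two largest eigenvalues of $\mathcal{Z}^{\mathsf T}\mathcal{Z}$) is also sound and correctly handles the degenerate cases $b_1=\pm b_2$ and the reduction of $\max|\cdot|$ to $\max(\cdot)$ via $a_m\mapsto -a_m$.
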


By the relation (21) in \cite{arXiv} the
operator norm of the spin-$1$ correlation matrix
\begin{equation}
\left\Vert \mathcal{Z}_{s=1}(\rho_{3\times 3})\right\Vert _{0}\leq 1\ ,\label{1_9}%
\end{equation}
so that its singular values cannot exceed $1$. Proposition \Ref{proposition1} implies the following corollary.

\begin{corollary}
\label{corollary1}
For a two-qutrit state $\rho_{3\times3},$ the ratio of the maximum (\ref{1_10}) of the absolute value of the
CHSH expectation (\ref{1_6}) under spin-$1$ measurements to the CHSH 
maximum in an LHV case is given by
\begin{equation}
\gamma_{s=1}(\rho_{3\times3})=\sqrt{z^{2}(\rho_{3\times3})+\widetilde{z}^{2}%
(\rho_{3\times3})}\label{1_11}%
\end{equation}
and, in view of (\ref{1_9}), is upper bounded by the Tsirelson \cite{7,8}
bound $\sqrt{2}$. A two-qutrit state $\rho_{3\times3}$ violates the CHSH
inequality under Alice and Bob spin-$1$ measurements if and only if
its spin-$1$ CHSH parameter satisfies the condition  $\gamma_{s=1}(\rho_{3\times 3})>1.$
\end{corollary}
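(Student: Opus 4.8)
The plan is to obtain all three assertions directly from Proposition \ref{proposition1} together with the norm bound (\ref{1_9}); the corollary is essentially a reformulation of the exact formula (\ref{1_10}) in terms of the LHV threshold $2$.

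First I would recall that in any local hidden variable model the CHSH expectation satisfies $\left\vert \langle \mathcal{B}_{chsh}(a_{1},a_{2};b_{1},b_{2})\rangle\right\vert \le 2$, and that this bound is attained, so the maximum of the absolute value of the CHSH expectation in the LHV case equals $2$. Dividing the spin-$1$ quantum maximum (\ref{1_10}) by this value yields at once
\begin{equation}
\frac{1}{2}\,\Upsilon_{chsh}^{(spin-1)}(\rho_{3\times3}) = \sqrt{z^{2}(\rho_{3\times3})+\widetilde{z}^{2}(\rho_{3\times3})} =: \gamma_{s=1}(\rho_{3\times3}),
\end{equation}
which is the first claim.

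Next, for the Tsirelson-type bound I would use that the operator norm of a real matrix equals its largest singular value; hence (\ref{1_9}) reads $z(\rho_{3\times3}) = \left\Vert \mathcal{Z}_{s=1}(\rho_{3\times3})\right\Vert_{0}\le 1$, and, since by definition $\widetilde{z}(\rho_{3\times3})\le z(\rho_{3\times3})$ is the second largest singular value, also $\widetilde{z}(\rho_{3\times3})\le 1$. Therefore $\gamma_{s=1}(\rho_{3\times3}) = \sqrt{z^{2}+\widetilde{z}^{2}} \le \sqrt{2}$, which is precisely the Tsirelson bound $2\sqrt{2}$ for $\Upsilon_{chsh}^{(spin-1)}$ divided by the LHV value $2$, the value $\sqrt{2}$ being reachable only if both of these singular values equal $1$.

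Finally, for the violation criterion I would invoke the definition: $\rho_{3\times3}$ violates the CHSH inequality under local Alice and Bob spin-$1$ measurements precisely when there exist spin-$1$ observables (\ref{1_0}) for which the absolute value of the CHSH expectation (\ref{1_6}) exceeds $2$, i.e. when $\Upsilon_{chsh}^{(spin-1)}(\rho_{3\times3})>2$; by the first claim this is equivalent to $\gamma_{s=1}(\rho_{3\times3})>1$. I do not expect any genuine obstacle, since every step is either an equality or a one-line inequality; the only point deserving a word of care is the identification in (\ref{1_9}) of the operator norm with the largest singular value $z$ together with the ordering $\widetilde{z}\le z$, which is what makes the bound come out as $\sqrt{2}$ rather than something larger.
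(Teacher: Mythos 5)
Your proposal is correct and follows exactly the route the paper intends: the corollary is an immediate consequence of Proposition \ref{proposition1} upon dividing the quantum maximum (\ref{1_10}) by the LHV value $2$, with the $\sqrt{2}$ bound coming from the identification of the operator norm in (\ref{1_9}) with the largest singular value and the ordering $\widetilde{z}\le z$, and the violation criterion coming from the definition $\Upsilon_{chsh}^{(spin-1)}(\rho_{3\times3})>2$. No gaps; the paper itself treats this as a one-line consequence and your write-up simply makes the steps explicit.
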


If a two-qutrit state $\rho_{3\times3}$, on $\mathbb{C}^{3}\otimes\mathbb{C}^{3}$ is
given by a convex combination of some two-qutrit states $\rho_{3\times 3}^{(l)}$ , that is:
\begin{equation}
\rho_{3\times3}=\sum_{l}\\
\xi_{l}\rho_{3\times3}^{(l)}, \text\
\xi_{l}>0, \\\sum_{l}\xi_{l}=1,
\end{equation}
then
\begin{align}
\gamma_{s=1}(\rho_{3\times3}) &  =\max_{\substack{a_{m},b_{k}\in\mathbb{R}%
^{3},\\\left\Vert a_{m}\right\Vert ,\left\Vert b_{k}\right\Vert =1}}\left\vert
\left\langle\mathcal{B}_{chsh}(a_{1},a_{2};b_{1},b_{2})\right\rangle
_{\rho_{3\times3}}\right\vert \label{1_12}\\
&  \leq\sum_{l}\xi_{l}\max_{\substack{a_{m},b_{k}\in\mathbb{R}^{3}%
,\\\left\Vert a_{m}\right\Vert ,\left\Vert b_{k}\right\Vert =1}}\left\vert
\left\langle \mathcal{B}_{chsh}(a_{1},a_{2};b_{1},b_{2})\right\rangle
_{\rho_{3\times3}^{(l)}}\right\vert =\sum_{l}\xi_{l}\text{ }\gamma_{s=1}
(\rho_{3\times3}^{(l)}).\nonumber
\end{align}

Note that, due to the
algebraic inequality $|x\pm y|$ $\leq1\pm xy,$ valid for all $x,y\in\lbrack-1,1]$ and relation $\mathrm{tr}
{[\rho\ S_{r}]}\leq1$, which holds for arbitrary states $\rho$ and spin-$1$ observables (\ref{1_0}), for any separable two-qutrit state 
\begin{equation}
\rho_{3\times3}%
^{(sep)}=\sum_{l}\xi_{l}\rho_{3}^{(1,l)}\otimes\rho_{3}^{(2,l)},
\end{equation}
where
$\rho_{3}^{(j,l)},$ $j=1,2,$ are states on $\mathbb{C}^{3},$ the CHSH expectation (\ref{1_6}) in a factorized state
$\rho_{3}^{(1,l)}\otimes\rho_{3}^{(2,l)}$ satisfies the relations
\begin{align}
&  \left\vert \langle\mathcal{B}_{chsh}(a_{1},a_{2};b_{1},b_{2})\rangle
_{\rho_{3}^{(1,l)}\otimes\rho_{3}^{(2,l)}}\right\vert \label{1_13}\\
&  \leq\left\vert \lbrack\mathrm{tr}[\rho_{3}^{(2,l)}S_{b_{1}}]+\mathrm{tr}%
[\rho_{3}^{(2,l)}S_{b_{2}}]\right\vert \text{ }+\left\vert [\mathrm{tr}%
[\rho_{3}^{(2,l)}S_{b_{1}}]-\mathrm{tr}[\rho_{3}^{(2,l)}S_{b_{2}}]\right\vert
\nonumber\\
&  \leq2,\nonumber
\end{align}
so that, for any factorized state $\rho_{3}^{(1,l)}\otimes\rho_{3}^{(2,l)}$,
the spin-$1$ CHSH parameter $\gamma_{s=1}(\rho_{3}^{(1,l)}\otimes\rho_{3}
^{(2,l)})\leq1$. Taking this into account in relation (\ref{1_12}) we have:
\begin{equation}
\label{separ}
\gamma_{s=1}\left(
\rho_{3\times3}^{(sep)}\right)  \leq 1\ . 
\end{equation} 
Thus, the general relation (\ref{1_12}) incorporates the well-known fact that a separable state does not violate the CHSH inequality.

With respect to the spin-$1$ correlation matrix Corollary \ref{corollary1} and relation (\ref{separ}) imply.
\begin{corollary}
\label{corollary2}
For every separable two-qutrit state the sum of the two largest singular values of the spin-$1$ correlation matrix satisfies the relation
\begin{equation}
z^{2}(\rho
_{3\times3})+\widetilde{z}^{2}(\rho_{3\times3})\leq 1\ .
\end{equation}

\end{corollary}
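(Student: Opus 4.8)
The plan is to obtain the bound by chaining together two facts that have already been established in this section. By Corollary \ref{corollary1}, the spin-$1$ CHSH parameter of \emph{any} two-qutrit state is exactly the Euclidean norm of the vector of its two largest spin-$1$ correlation-matrix singular values, namely $\gamma_{s=1}(\rho_{3\times3})=\sqrt{z^{2}(\rho_{3\times3})+\widetilde{z}^{2}(\rho_{3\times3})}$ by (\ref{1_11}). On the other hand, relation (\ref{separ}) asserts that for a separable two-qutrit state $\rho_{3\times3}^{(sep)}$ one has $\gamma_{s=1}(\rho_{3\times3}^{(sep)})\leq 1$. Substituting the first identity into the second inequality and squaring both sides immediately gives $z^{2}(\rho_{3\times3}^{(sep)})+\widetilde{z}^{2}(\rho_{3\times3}^{(sep)})\leq 1$, which is the claim. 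So the proof is essentially a one-line combination of Corollary \ref{corollary1} and (\ref{separ}).

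For a self-contained version one could instead unwind (\ref{separ}): apply the convexity estimate (\ref{1_12}) to the decomposition $\rho_{3\times3}^{(sep)}=\sum_{l}\xi_{l}\,\rho_{3}^{(1,l)}\otimes\rho_{3}^{(2,l)}$, use the factorized-state bound (\ref{1_13}) — which itself rests on the elementary inequality $|x\pm y|\leq 1\pm xy$ for $x,y\in[-1,1]$ together with $\mathrm{tr}[\rho\,S_{r}]\leq 1$ — to conclude $\gamma_{s=1}(\rho_{3}^{(1,l)}\otimes\rho_{3}^{(2,l)})\leq 1$ for each factor, and then invoke (\ref{1_10}) to rewrite the resulting bound on $\gamma_{s=1}$ as a bound on $\sqrt{z^{2}+\widetilde{z}^{2}}$. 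There is no genuine obstacle: the nontrivial content already sits in Corollary \ref{corollary1} and relation (\ref{separ}), and no optimization over measurement directions beyond what (\ref{1_10}) encodes is needed. The only point worth stating carefully in the write-up is that the conclusion concerns the \emph{sum of squares} $z^{2}+\widetilde{z}^{2}$, not the individual singular values (each of which is already bounded by $1$ through (\ref{1_9})); it is this joint bound that is new here.
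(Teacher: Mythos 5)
Your proof is correct and follows exactly the paper's own route: the paper presents Corollary \ref{corollary2} as an immediate consequence of combining the identity $\gamma_{s=1}(\rho_{3\times3})=\sqrt{z^{2}(\rho_{3\times3})+\widetilde{z}^{2}(\rho_{3\times3})}$ from Corollary \ref{corollary1} with the separable-state bound (\ref{separ}), which is precisely your one-line argument. Your optional unwinding of (\ref{separ}) via (\ref{1_12}) and (\ref{1_13}) also matches how the paper itself establishes that bound, so there is nothing to add.
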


In the following sections, based on Proposition \ref{proposition1} and Corollary
\ref{corollary1}, we analyze the value of the spin-$1$ CHSH parameter (\ref{1_11}) for a variety of nonseparable two-qutrit
states.

\section{Spin correlation matrix for a two-qutrit state}
\label{spin_correlation_matrix_section}
In this Section, we find the spin
correlation matrix (\ref{1_8}) for an arbitrary two-qutrit state $\rho_{3\times 3}$ on
$\mathbb{C}^{3}\otimes\mathbb{C}^{3}$.
Let 
\begin{align}
\rho_{3\times 3}  &  =\sum\zeta_{mm^{\prime},kk^{\prime}}|mk\rangle\langle
m^{\prime}k^{\prime}|,\label{22_}\\
\zeta_{mm^{\prime},kk^{\prime}}  &  =\langle mk|\rho_{3\times 3}|m^{\prime
}k^{\prime}\rangle ,\nonumber\\
\zeta_{mm^{\prime},kk^{\prime}}^{\ast}  &  =\zeta_{m^{\prime}m,k^{\prime}%
k},\text{ \ \ }\sum_{m,k}\zeta_{mm,kk}=1, \nonumber
\end{align}
be the representation of a state  $\rho_{3\times 3}$ via the elements in the computational basis of $\mathbb{C}^{3}\otimes\mathbb{C}^{3}.$

Specifying relations (44)-(46) in \cite{arXiv} for the spin-$1$ correlation matrix of a two-qutrit state $\rho_{3\times 3}$, we come to the following results. 
The elements of the first row:
\begin{align}
\mathcal{Z}_{s=1}^{(11)}(\rho_{3\times 3})  &  =\frac{1}{2}\sum_{m,k=1}%
^{2}\text{ }\sqrt{mk(3-m)(3-k)}\mathrm{Re}\left[  \text{ }\zeta
_{m(m+1),k(k+1)}+\zeta_{m(m+1),(k+1)k}\right]  ,\label{z1}\\
\mathcal{Z}_{s=1}^{(12)}(\rho_{3\times 3})  &  =\frac{1}{2}\sum_{m,k=1}%
^{2}\text{ }\sqrt{mk(3-m)(3-k)}\mathrm{Im}\left[  \zeta_{m(m+1),k(k+1)}%
+\zeta_{(m+1)m,k(k+1)}\right]  ,\nonumber\\
\mathcal{Z}_{s=1}^{(13)}(\rho_{3\times 3})  &  =\frac{1}{2}\sum_{m,k=1}%
^{3}\text{ }\sqrt{m(3-m)}\left(  4-2k\right)  \mathrm{Re}\left[
\zeta_{(m+1)m,kk}\right]  .\nonumber
\end{align}
The elements of the second row:
\begin{align}
\mathcal{Z}_{s=1}^{(21)}(\rho_{3\times 3})  &  =\frac{1}{2}\sum_{m,k=1}%
^{2}\text{ }\sqrt{mk(3-m)(3-k)}\text{ }\mathrm{Im}\left[  \zeta
_{m(m+1),k(k+1)}+\zeta_{m(m+1),(k+1)k}\right]  ,\label{z2}\\
\mathcal{Z}_{s=1}^{(22)}(\rho_{3\times 3})  &  =\frac{1}{2}\sum_{m,k=1}%
^{2}\text{ }\sqrt{mk(3-m)(3-k)}\text{ }\mathrm{Re}\left[  \zeta
_{(m+1)m,k(k+1)}-\zeta_{(m+1)m,(k+1)k}\right]  ,\nonumber\\
\mathcal{Z}_{s=1}^{(23)}(\rho_{3\times 3})  &  =\frac{1}{2}\sum_{m,k=1}%
^{3}\text{ }\sqrt{m(3-m)}\left(  4-2k\right)  \mathrm{Im}\left[
\zeta_{m(m+1),kk}\right] ,\nonumber
\end{align}
and the elements of the third row:
\begin{align}
\mathcal{Z}_{s=1}^{(31)}(\rho_{3\times 3})  &  =\frac{1}{2}\sum_{m,k=1}%
^{3}\text{ }\left(  4-2m\right)  \sqrt{k(3-k)}\mathrm{Re}\left[
\zeta_{mm,(k+1)k}\right]  ,\label{z3}\\
\mathcal{Z}_{s=1}^{(32)}(\rho_{3\times 3})  &  =\frac{1}{2}\sum_{m,k=1}%
^{3}\text{ }\left(  4-2m\right)  \sqrt{k(3-k)}\mathrm{Im}\left[
\zeta_{mm,k(k+1)}\right]  ,\nonumber\\
\mathcal{Z}_{s=1}^{(33)}(\rho_{3\times 3})  &  =\frac{1}{4}\sum_{m,k=1}%
^{3}(4-2m)(4-2k)\zeta_{mm,kk}\ .\ \nonumber
\end{align}

In the following subsections, using  relations (\ref{z1})--(\ref{z3}), we find the values of the spin-$1$ CHSH parameter (\ref{1_11}) for a variety of pure and mixed two-qutrit states. Recall that,  according to Corollary \ref{corollary1}, under local spin-$1$ measurements in  a two-qutrit state, the CHSH inequality is violated iff the value of this parameter is greater than $1$. 

\subsection{Pure two-qutrit states}

In this Section, we compute the values of the spin-$1$ CHSH parameter (\ref{1_11}) for a variety of pure two-qutrit states. 

Consider first the family of pure two-qutrit states  $|\psi_{3\times3}^{(asym)}\rangle\langle
\psi_{3\times3}^{(asym)}|$, where the unit vector $|\psi_{3\times3}^{(asym)}\rangle$ belongs to the subspace of antisymmetric vectors in $\mathbb{C}^{3} \otimes\mathbb{C}^{3}$. 
In this $3$-dimensional
subspace, the following three 
antisymmetric unit vectors
\begin{align}
|\phi_{12}^{(-)}\rangle & = \frac{1}{\sqrt{2}}\left( |1\rangle\otimes
|2\rangle-|2\rangle\otimes|1\rangle\right) ,\\
|\phi_{13}^{(-)}\rangle & = \frac{1}{\sqrt{2}}\left( |1\rangle\otimes
|3\rangle-|3\rangle\otimes|1\rangle\right) ,\nonumber\\
|\phi_{23}^{(-)}\rangle & = \frac{1}{\sqrt{2}}\left( |2\rangle\otimes
|3\rangle-|3\rangle\otimes|2\rangle\right) ,\nonumber
\end{align}
constitute an orthonormal basis, so that the decomposition of each vector $|\psi_{3\times3}^{(asym)}\rangle$ in  $\mathbb{C}^{3}\otimes\mathbb{C}^{3}$ reads
\begin{align}
\label{anti}|\psi_{3\times3}^{(asym)}\rangle & =\alpha_{12} |\phi_{12}%
^{(-)}\rangle+\alpha_{13}|\phi_{13}^{(-)}\rangle+\alpha_{23}|\phi_{23}%
^{(-)}\rangle\ ,\\
\alpha_{12},\alpha_{13} & ,\alpha_{23}\in\mathbb{C},\ \ \ |\alpha_{12}%
|^{2}+|\alpha_{13}|^{2} + |\alpha_{23}|^{2}=1\ .\ \nonumber
\end{align}
Each pure state  $|\psi_{3\times3}^{(asym)}\rangle\langle\psi_{3\times
3}^{(asym)}|$ of the form (\ref{anti}) is nonseparable, see Proposition \ref{proposition2} in Section \ref{entanglement_vs_nonviolation_chsh}. 

In decomposition (\ref{22_}), the non-vanishing coefficients of this pure state 
read
\begin{equation}
\zeta_{ii,jj}=-\zeta_{ij,ji}=-\zeta_{ji,ij}=\zeta_{jj,ii}=\frac{|\alpha
_{ij}|^{2}}{2}\ ,\text{ \ \ \ }i\neq j,\text{ \ }i,j=1,2,3,\label{31}%
\end{equation}
and
\begin{align}
\zeta_{12,j3} &  =-\zeta_{13,j2}=\zeta_{21,3j}^{\ast}=-\zeta_{j2,13}%
=-\zeta_{2j,31}^{\ast}=\zeta_{j3,12}=-\zeta_{31,2j}^{\ast}=\zeta_{3j,21}%
^{\ast}=\frac{\alpha_{1j}\alpha_{23}^{\ast}}{2},\label{31_1}\ \\
&j=2,3.\nonumber
\end{align}

Note that since state  $|\psi_{3\times3}^{(asym)}\rangle\langle\psi_{3\times3}^{(asym)}|$ is
invariant under the permutation of the Hilbert spaces in the tensor product
$\mathbb{C}^{3}\otimes\mathbb{C}^{3}$, by (\ref{1_8}) its spin-$1$ correlation
matrix $\mathcal{Z}_{s=1}(|\psi_{3\times3}^{(asym)}\rangle)$ is symmetric.

From relations (\ref{z1})--(\ref{z3}) and (\ref{31}), (\ref{31_1}) it follows that this matrix has the form
\begin{equation}
\label{61}\frac{1}{2}\left(
\begin{array}
[c]{ccc}%
-|\alpha_{12}-\alpha_{23}|^{2} & 2 \mathrm{Im}\left( \alpha_{12}^{*}%
\alpha_{23}\right)  & \sqrt{2} \mathrm{Re}\left( \alpha_{12}\alpha_{13}%
^{*}-\alpha_{13}\alpha_{23}^{*}\right) \\
2 \mathrm{Im}\left( \alpha_{12}^{*}\alpha_{23}\right)  & -|\alpha_{12}%
+\alpha_{23}|^{2} & \sqrt{2}\mathrm{Im}\left( \alpha_{13}(\alpha_{12}%
^{*}+\alpha_{23} ^{*})\right) \\
\sqrt{2}\mathrm{Re}\left( \alpha_{12}\alpha_{13}^{*}-\alpha_{13}\alpha
_{23}^{*}\right)  & \sqrt{2}\mathrm{Im}\left( \alpha_{13}(\alpha_{12}%
^{*}+\alpha_{23} ^{*})\right)  & -2 |\alpha_{13}|^{2}%
\end{array}
\right) \
\end{equation}
The singular values of this matrix are equal to $0$ and 
$\frac{1}{2}|(1\pm|\alpha_{13}^{2}-2\alpha_{12}\alpha_{23}|)|$. 

Therefore, by (\ref{1_11}), 
for any pure state with a vector $|\psi^{(asym)}_{3\times3}\rangle$ of the form  (\ref{anti}), the spin-$1$ CHSH parameter is given by
\begin{equation}
\label{gen_ant}\gamma_{s=1}(|\psi_{3\times3}^{(asym)}\rangle)=\sqrt
{z^2(|\psi_{3\times3}^{(asym)}\rangle)+\tilde{z}^2(|\psi_{3\times3}^{(asym)}\rangle)}=
\sqrt{\frac{1+|\alpha_{13}^{2}-2 \alpha_{12} \alpha_{23}|^{2}}{2}}\ .
\end{equation}
Taking into account the normalization relation in (\ref{anti}), for the
radicand in (\ref{gen_ant}), we have
\begin{align}
\label{63}\frac{1+|\alpha_{13}^{2}-2 \alpha_{12} \alpha_{23}|^{2}}{2} 
&\leq \frac{1+\left(|\alpha_{13}|^{2}+2 |\alpha_{12} \alpha_{23}|\right)^{2}}{2} \\
& \leq\frac{1+(|\alpha_{13} |^{2}+|\alpha_{12}|^{2}+|\alpha_{23}|^{2})^2}{2}
=1\ .\nonumber
\end{align}
From Eqs.  (\ref{gen_ant}) and (\ref{63}) it follows that, for a pure
two-qutrit state with vector  
$|\psi_{3\times3}^{(asym)}\rangle$, the spin-$1$ CHSH
parameter 
\begin{equation}
1/2\leq \gamma_{s=1}(|\psi_{3\times3}^{(asym)}\rangle)\leq 1\ .\label{x} 
\end{equation}
Therefore,  by Corollary \ref{corollary1}, under Alice and Bob spin-$1$ measurements in each of nonseparable pure two-qutrit
states (\ref{anti}) the CHSH inequality is not violated.

Let us further consider the family of pure two-qutrit  states  $|\psi_{3\times3}^{(sym)}\rangle\langle
\psi_{3\times3}^{(sym)}|$, where a unit vector $|\psi_{3\times3}^{(sym)}\rangle$ belongs to the symmetric subspace of  $\mathbb{C}^{3} \otimes\mathbb{C}^{3}$, moreover, has the form 
\begin{align}
\label{symm_qutrits}|\psi_{3\times3}^{(sym)}\rangle=\alpha_{11}|\phi
_{11}^{(+)}\rangle+\alpha_{22}|\phi_{22}^{(+)}\rangle+\alpha_{33}|\phi
_{33}^{(+)}\rangle\ ,\\\
\ \ \alpha_{11},\alpha_{22},\alpha_{33}\in\mathbb{C}, \ \ |\alpha_{11}%
|^{2}+|\alpha_{22}|^{2} + |\alpha_{33}|^{2}=1\ ,\nonumber
\end{align}
where $|\phi_{ii}^{(+)}\rangle  = |ii\rangle ,\  i= 1,2,3$, are mutually orthogonal symmetric unit vectors in $\mathbb{C}^{3}\otimes\mathbb{C}^{3}$. Each of the pure states of the form (\ref{symm_qutrits}) is nonseparable unless any two of the coefficients $\alpha_{ii}$ are simultaneously equal to zero (see Proposition \ref{proposition3} in Section \ref{entanglement_vs_nonviolation_chsh}).

For the state $|\psi_{3\times3}%
^{(sym)}\rangle\langle\psi_{3\times3}^{(sym)}|$, the coefficients in decomposition 
(\ref{22_}) are given by
\begin{equation}
\label{coefsym}
\zeta_{ij,ij}=\zeta_{ji,ji}^{*}=\alpha_{ii}\alpha_{jj}^{*} ,\   i,j = 1,2,3,\  i\leq j , 
\end{equation}
so that by Eqs. (\ref{z1})--(\ref{z3}) and (\ref{coefsym}), for
a pure two-qutrit state of this type, the spin-$1$ correlation matrix is equal to
\begin{equation}
\mathcal{Z}_{s=1}(|\psi_{3\times3}^{(sym)}\rangle)=
\begin{pmatrix}
\mathrm{Re}\left(  \alpha_{11}^{*}\alpha_{22}+\alpha_{22}^{*}\alpha
_{33}\right)  & \mathrm{Im}\left(  \alpha_{11}^{*}\alpha_{22}+\alpha_{22}%
^{*}\alpha_{33}\right)  & 0\\
\mathrm{Im}\left(  \alpha_{11}^{*}\alpha_{22}+\alpha_{22}^{*}\alpha
_{33}\right)  & -\mathrm{Re}\left(  \alpha_{11}^{*}\alpha_{22}+\alpha_{22}%
^{*}\alpha_{33}\right)  & 0\\
0 & 0 & |\alpha_{11}|^{2} +|\alpha_{33}|^{2}\\
&  &
\end{pmatrix}\ ,
\end{equation}
 and has the singular value $
|\alpha_{11}|^{2}+|\alpha_{33} |^{2}$ of multiplicity one and the singular value 
$|\alpha_{11}^{*}\alpha_{22}+\alpha_{22}^{*}\alpha_{33}|$
of multiplicity two.

Consider first the case where $|\psi_{3\times3}^{(sym)}\rangle$ is such that, in decomposition (\ref{symm_qutrits}), its coefficients satisfy the relation
\begin{equation}
|\alpha_{11}^{*}\alpha_{22}+\alpha_{22}^{*}%
\alpha_{33}| \geq|\alpha_{11}|^{2}+|\alpha_{33}|^{2}.\label{c} \end{equation}
In this case, by (\ref{1_11}) and the above singular values of matrix $\mathcal{Z}_{s=1}(|\psi_{3\times3}^{(sym)}\rangle)$ the spin-$1$ CHSH parameter for the corresponding pure two-qutrit state is given by 
\begin{equation}
\gamma_{s=1}(|\psi_{3\times3}
^{(sym)}\rangle)=\sqrt{2}|\alpha_{11}^{*}\alpha_{22}+\alpha_{22}^{*}\alpha_{33}|\label{b}\ .
\end{equation}
Taking into account that, in view of the normalization condition
\begin{align}
\label{73}2 |\alpha_{11}^{*}\alpha_{22}+\alpha_{22}^{*}\alpha_{33}|^{2}  &
\leq2 |\alpha_{22}|^{2} \left(  |\alpha_{11}|^{2}+|\alpha_{33}|^{2}%
+2|\alpha_{11}||\alpha_{33}|\right) \\
& \leq4 |\alpha_{22}|^{2} \left(  |\alpha_{11}|^{2}+|\alpha_{33}|^{2}\right)
\nonumber\\
& \leq4 |\alpha_{22}|^{2} \left(  1-|\alpha_{22} |^{2}\right)  \ \nonumber
\end{align}
and relation $|\alpha_{22}|^{2} \left(
1-|\alpha_{22} |^{2}\right) \leq1/4$, we conclude that, for the pure state $|\psi_{3\times3}
^{(sym)}\rangle$ in case (\ref{c}), the spin-$1$ CHSH parameter (\ref{b}) is upper bounded
by 
\begin{equation}
\label{gamma_sym_1}\gamma_{s=1}(|\psi_{3\times3}^{(sym)}\rangle)\leq 1\ .
\end{equation}

In the case opposite to (\ref{c}):  
\begin{equation}
|\alpha_{11}^{*}\alpha_{22}+\alpha_{22}^{*}\alpha_{33}|
<|\alpha_{11} |^{2}+|\alpha_{33}|^{2},\label{d}
\end{equation} 
the spin-$1$ CHSH parameter of state $|\psi_{3\times3}
^{(sym)}\rangle$ is equal to 
\begin{equation}
\label{gamma_sym}
\gamma_{s=1}(|\psi_{3\times3}^{(sym)}\rangle)= 
\sqrt{|\alpha_{11}^{*}\alpha_{22}+\alpha_{22}^{*}\alpha_{33}|^{2} +\left(
|\alpha_{11}|^{2}+|\alpha_{33}|^{2}\right)^{2}} \ .
\end{equation}
Taking here into account that
\begin{align}
|\alpha_{11}^{*}\alpha_{22}+\alpha_{22}^{*}\alpha_{33}|^{2} +\left(
|\alpha_{11}|^{2}+|\alpha_{33} |^{2}\right)  ^{2}  &  \leq|\alpha_{22}%
|^{2}\left(  |\alpha_{11}|+|\alpha_{33}|\right)  ^{2} +(1-|\alpha_{22}
|^{2})^{2}\\
&\leq  2|\alpha_{22}|^{2}(1-|\alpha_{22}|^{2})+(1-|\alpha_{22}
|^{2})^{2}\nonumber\\
&  =1-|\alpha_{22}|^{4}\ ,\nonumber
\end{align}
we conclude that, as in the previous case, 
\begin{equation}
\label{gamma_sym_2}
\gamma_{s=1}(|\psi_{3\times
3}^{(sym)}\rangle)\leq 1\ .
\end{equation}
Consequently, under local Alice and Bob spin-$1$ measurements in a pure symmetric
two-qutrit state of the form (\ref{symm_qutrits}), the CHSH inequality is not violated.

\subsubsection{Two-qutrit GHZ state}
The GHZ state on $\mathbb{C}^3\otimes \mathbb{C}^3$ is a particular case of pure states (\ref{symm_qutrits}), namely, for the two--qutrit GHZ state, vector $|\psi_{3\times
3}^{(sym)}\rangle$ in (\ref{symm_qutrits}) is given as $|GHZ_{3}\rangle= \frac{1}{\sqrt{3}} \sum_{m=1,2,3} |mm\rangle$.
This and Eq. (\ref{b}) imply that the spin-$1$ CHSH parameter of the GHZ state is equal to
\begin{equation}
\label{ghz_gamma}
\gamma_{s=1}(|GHZ_{3}\rangle) = \sqrt{\frac{8}{9}}\ .
\end{equation}

We stress that, for a pure separable two-qutrit state, let described by coefficients  $\alpha_{11}=1$ and $\alpha_{22}=\alpha_{33}=0$ in Eq. (\ref{symm_qutrits}), the spin-$1$ CHSH parameter is equal to unity. 

This and Eq. (\ref{ghz_gamma}) imply that, for a maximally entangled two-qutrit state,  the spin-$1$ CHSH parameter may be less than that for a separable pure two-qutrit state.

\subsection{Mixed two-qutrit states}
\label{mixed_twoqutrit_states}

In this section, we calculate the values of the spin-$1$ CHSH parameter for the Werner state \cite{Wer:89} and for the Horodecki state \cite{Hor.Hor.Hor:99}.

\subsubsection{Two-qutrit Werner state}
The two-qutrit Werner state is defined \cite{Wer:89} as
\begin{align}
\rho _{3,\Phi }^{(wer)}=\frac{3-\Phi }{24}&\mathbb{I}_{\mathbb{C}%
^{3}\otimes \mathbb{C}^{3}}+\frac{3\Phi -1}{24}V_{3}\label{g} ,\ \ \text{}\
\Phi \ \in \left[ -1,1\right] ,
\end{align}
where $V_{3}(\psi_{1}\otimes \psi_{2}):=\psi_{2}\otimes \psi_{1}$ is the
permutation operator on $\mathbb{C}^{3}\otimes \mathbb{C}^{3}$. The state in (\ref{g}) is separable iff $\Phi \ \in \ \left[ 0,1\right] $ and nonseparable otherwise, also, under projective measurements of Alice
and Bob, the nonseparable Werner state $\rho_{3,\Phi }^{(wer)}$ admits an LHV model for all $\Phi \in \lbrack -\frac{5}{9},0).$

From Eq. (57) in \cite{arXiv} it follows that the spin-$1$ CHSH parameter of the two-qutrit Werner state is given by

\begin{equation}
\gamma_{s=1}(\rho_{3,\Phi }^{(wer)}) =\frac{\sqrt{2}}{12}|3\Phi -1|\leq 1\ ,\ \forall\ \Phi\in \left[ -1,1\right]\ . \label {yy}
\end{equation}
This and Corollary \ref{corollary1} imply that, under local Alice and Bob spin-$1$ measurements in any nonseparable Werner state  $\rho_{3,\Phi }^{(wer)}$, even nonlocal ($ -1 \leq \Phi <-\frac{5}{9}$), the CHSH inequality is not violated\footnote{This is consistent with a more general result following from Theorem 3 in \cite{Lou:05} that, for any $d>2$, the nonseparable Werner state does not violate the CHSH inequality under all types of local Alice and Bob measurements. 
}.

\subsubsection{Two-qutrit Horodecki state}
Consider the Horodecki mixed state
\begin{equation}
\rho^{(hor)}_{3\times3}(\tau)=\frac{2}{7}\ \rho_{3\times3}%
^{(ghz)}+\frac{\tau}{7}\ \xi_{3\times3}^{(+)}+\frac{5-\tau}{7}%
\xi_{3\times3}^{(-)}\ ,\text\ \tau\in\left[ 2,5\right],\label{hor}
\end{equation}
introduced in \cite{Hor.Hor.Hor:99}. Here: (i) $\rho_{3\times 3}^{(ghz)}=\frac{1}{3}\sum_{m,m^{\prime}}|mm\rangle\langle
m^{\prime}m^{\prime}|$ is the
two-qutrit Greenberger--Horne--Zeilinger (GHZ) state, which is maximally entangled, and (ii) $\xi_{3\times3}^{(\pm)}$ are the mixed separable states
\begin{align}
\xi_{3\times3}^{(+)} & =\frac{1}{3}\left(  |12\rangle\langle12|+|23\rangle
\langle23|+|31\rangle\langle31| \right) \ ,\\
\xi_{3\times3}^{(-)} & =\frac{1}{3}\left(  |21\rangle\langle 21|+|32\rangle
\langle 32|+|13\rangle\langle 13| \right) \ .\nonumber
\end{align}

As it is proved in \cite{Hor.Hor.Hor:99}, the Horodecki state (\ref{hor}) is
separable if $2\leq\tau\leq3$, bound entangled if $3<\tau\leq4$ and free
entangled for $4<\tau\leq5$.

For the Horodecki state (\ref{hor}), the nonzero coefficients in decomposition (\ref{22_}) are given
by
\begin{align}
\label{coef_hor} & \zeta_{11,11}=\zeta_{12,12}=\zeta_{13,13}= \zeta
_{21,21}=\zeta_{22,22}=\zeta_{23,23}= \zeta_{31,31}=\zeta_{32,32}%
=\zeta_{33,33}=\frac{2}{21}\ ,\\
& \zeta_{11,22}=\zeta_{22,33}=\zeta_{33,11}=\frac{\tau}{21}\ ,\ \zeta
_{11,33}=\zeta_{22,11}=\zeta_{33,22}=\frac{5-\tau}{21}\ ,\nonumber
\end{align}
so that by Eqs. (\ref{z1})-(\ref{z3}) the spin correlation matrix for this state has the form
\begin{equation}
\mathcal{Z}_{s=1}( \rho^{(hor)}_{3\times3}(\tau))=\frac{1}{21}
\begin{pmatrix}
4 & 0 & 0\\
0 & -4 & 0\\
0 & 0 & -1\\
\end{pmatrix}
\ ,
\end{equation}
for all $\tau\in\left[ 2,5\right]$. This matrix has the greatest singular value $4/21$ of multiplicity 2, so that the spin-$1$ CHSH parameter (\ref{1_11}) is equal to 
\begin{equation}
\label{gamma_hor}\gamma_{s=1}( \rho^{(hor)}_{3\times3}(\tau))=\frac{4\sqrt{2}%
}{21}%
\end{equation}
and does not depend on the value of parameter $\tau\in\left[ 2,5\right]$, which defines the
entanglement class of the Horodecki state. 

Since the spin-$1$ CHSH parameter (\ref{gamma_hor}) is less than one for all $\tau\in\left[ 2,5\right]$, by Corollary \ref{corollary1}, under
Alice and Bob spin-$1$ measurements in the Horodecki state (\ref{hor}),
the CHSH inequality is not violated independently of the entanglement class of this state. 

\section{Entanglement versus the CHSH nonviolation}
\label{entanglement_vs_nonviolation_chsh}

In this Section, we analyze the relation between values of the spin-$1$ CHSH parameter for pure two-qutrit states $|\psi_{3\times3}\rangle\langle
\psi_{3\times3} | $, considered in Sections 3.1  and 3.2,  and  values of their concurrence 
\cite{Che.Alb.Fei:05} 

\begin{equation}
\label{concurrence}\mathrm{C}(|\psi_{3\times3}\rangle)=\sqrt{2\left(
1-\mathrm{tr}\left[ \rho_{j}^{2}\right] \right) }\ .
\end{equation}
Here, $\rho_{j}$, $j=1,2$, are the states on
$\mathbb{C}^{3}$, reduced from $|\psi_{3\times3}\rangle\langle
\psi_{3\times3} | $,  and by the Schmidt theorem $\mathrm{tr}[\rho_{1}%
^{2}]=\mathrm{tr}[\rho_{2}^{2}]$. Note that for a maximally
entangled two-qutrit state $\mathrm{C}(|\psi_{3\times3}\rangle)=2/\sqrt{3}.$

\begin{proposition}
\label{proposition2}  For every pure two-qutrit state of the form (\ref{anti}), the concurrence (\ref{concurrence}) is equal to 
\begin{equation}
\mathrm{C}(|\psi_{3\times3}^{(asym)}\rangle)=1\ .\
\end{equation}
\end{proposition}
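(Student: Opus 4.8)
The plan is to compute the reduced state $\rho_1$ (equivalently $\rho_2$) of an arbitrary state $|\psi_{3\times3}^{(asym)}\rangle$ of the form (\ref{anti}) and show that it is always the maximally mixed state $\frac{1}{3}\mathbb{I}_{\mathbb{C}^3}$, from which $\mathrm{tr}[\rho_1^2]=1/3$ and hence, by (\ref{concurrence}), $\mathrm{C}(|\psi_{3\times3}^{(asym)}\rangle)=\sqrt{2(1-1/3)}=\sqrt{4/3}=2/\sqrt{3}$. Wait --- the statement claims the concurrence equals $1$, not $2/\sqrt3$, so I should instead verify which normalization of concurrence is being used; the formula (\ref{concurrence}) with $\rho_1=\frac13\mathbb I$ gives $2/\sqrt3$, so either the stated value uses a rescaled concurrence or $\rho_1$ is not maximally mixed. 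Let me reconsider: I would first carefully expand the three antisymmetric basis vectors $|\phi_{ij}^{(-)}\rangle$ in the computational basis, substitute into (\ref{anti}), and take the partial trace over the second factor. Since $|\phi_{ij}^{(-)}\rangle = \frac{1}{\sqrt2}(|ij\rangle - |ji\rangle)$, each contributes $\frac12(|i\rangle\langle i| + |j\rangle\langle j|)$ to $\rho_1$ from its "diagonal" part and cross terms between different $\alpha_{ij}$ contribute off-diagonal pieces; the antisymmetry forces these cross terms to cancel pairwise, leaving $\rho_1$ diagonal with entries determined by $|\alpha_{12}|^2,|\alpha_{13}|^2,|\alpha_{23}|^2$. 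Concretely I expect $\langle 1|\rho_1|1\rangle = \frac12(|\alpha_{12}|^2+|\alpha_{13}|^2)$, $\langle 2|\rho_1|2\rangle=\frac12(|\alpha_{12}|^2+|\alpha_{23}|^2)$, $\langle 3|\rho_1|3\rangle=\frac12(|\alpha_{13}|^2+|\alpha_{23}|^2)$, whose sum is $1$ by normalization but which is not in general $\frac13\mathbb{I}$.

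So the actual computation is: $\mathrm{tr}[\rho_1^2] = \frac14\big[(|\alpha_{12}|^2+|\alpha_{13}|^2)^2 + (|\alpha_{12}|^2+|\alpha_{23}|^2)^2 + (|\alpha_{13}|^2+|\alpha_{23}|^2)^2\big]$, provided the off-diagonal entries of $\rho_1$ indeed vanish. Writing $p=|\alpha_{12}|^2$, $q=|\alpha_{13}|^2$, $r=|\alpha_{23}|^2$ with $p+q+r=1$, one gets $\mathrm{tr}[\rho_1^2] = \frac14[(1-r)^2+(1-q)^2+(1-p)^2] = \frac14[3 - 2(p+q+r) + (p^2+q^2+r^2)] = \frac14[1 + (p^2+q^2+r^2)]$, so $1 - \mathrm{tr}[\rho_1^2] = \frac14[3 - (p^2+q^2+r^2)]$, which depends on the $\alpha_{ij}$ and is not constant --- this contradicts the claim of a universal value. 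Hence the off-diagonal entries of $\rho_1$ must \emph{not} all vanish, or the concurrence in (\ref{concurrence}) is being applied to the two-dimensional reduction appropriate to the antisymmetric subspace. The key realization is that the antisymmetric subspace of $\mathbb{C}^3\otimes\mathbb{C}^3$ is isomorphic (as an $SU(3)$-representation) to $\mathbb{C}^3$ itself via $|\phi_{ij}^{(-)}\rangle \leftrightarrow \varepsilon_{ijk}|k\rangle$, so every vector in it is, up to local unitaries, equivalent to the singlet $\frac{1}{\sqrt3}(|\phi_{12}^{(-)}\rangle+\dots)$-type structure --- in fact every antisymmetric pure two-qutrit vector has the \emph{same} Schmidt coefficients. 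I would exploit exactly this: since concurrence is a local-unitary invariant, it suffices to compute it for one representative.

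The cleanest route: I would show that any $|\psi_{3\times3}^{(asym)}\rangle$ has Schmidt rank $2$ with Schmidt coefficients $\{\tfrac12,\tfrac12\}$, independent of $(\alpha_{12},\alpha_{13},\alpha_{23})$. This follows because the antisymmetric vector $\sum_{i<j}\alpha_{ij}(|ij\rangle-|ji\rangle)/\sqrt2$ corresponds to an antisymmetric $3\times3$ matrix $A$ with $A_{ij}=\alpha_{ij}/\sqrt2$, $A_{ji}=-\alpha_{ij}/\sqrt2$; the Schmidt coefficients are the singular values of $A$, and a nonzero $3\times 3$ antisymmetric complex matrix always has singular values of the form $\{\mu,\mu,0\}$ (its nonzero eigenvalues come in $\pm$ pairs after suitable normal form, and its rank is even). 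Normalization $\sum 2\mu^2 = 1$ forces $\mu^2 = 1/4$, so $\mathrm{tr}[\rho_1^2] = 2\mu^4 = 1/8$, giving $\mathrm{C} = \sqrt{2(1-1/8)}=\sqrt{7/4}$ --- still not $1$. I therefore expect the paper either uses $\mathrm{tr}[\rho_1^2]=1/2$ (two equal eigenvalues $1/2$, i.e. the reduced state lives effectively on $\mathbb{C}^2$ with $\mathrm{tr}=1$), which gives $\mathrm{C}=\sqrt{2(1-1/2)}=1$ --- yes, this is it: the two nonzero Schmidt coefficients are $\lambda_1=\lambda_2=1/2$ as probabilities (so $\rho_1$ has eigenvalues $\{1/2,1/2,0\}$), hence $\mathrm{tr}[\rho_1^2]=1/2$ and $\mathrm{C}=1$.

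\textbf{Summary of the steps I would carry out.} First, write $|\psi_{3\times3}^{(asym)}\rangle$ as an antisymmetric $3\times3$ complex matrix $A$ with $A_{ij}=\alpha_{ij}/\sqrt2$ for $i<j$; the normalization $\sum|\alpha_{ij}|^2=1$ becomes $\mathrm{tr}[A^\dagger A]=1/2$. Wait --- $2\sum_{i<j}|A_{ij}|^2 = \sum_{i<j}|\alpha_{ij}|^2 = 1$, so $\mathrm{tr}[A^\dagger A] = \sum_{i<j}|\alpha_{ij}|^2 = 1/2$. Hmm, let me recount: $\mathrm{tr}[A^\dagger A] = \sum_{i,j}|A_{ij}|^2 = \sum_{i\neq j}|\alpha_{ij}|^2/2 = 2\cdot\frac12\sum_{i<j}|\alpha_{ij}|^2 = 1$. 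So $\mathrm{tr}[A^\dagger A]=1$, and the reduced state is $\rho_1 = A A^\dagger$ (or its transpose), with $\mathrm{tr}[\rho_1]=1$ as required. Second, invoke the structure theorem for $3\times3$ antisymmetric complex matrices: such a matrix has even rank, so rank $2$ (it is nonzero), and by the Autonne--Takagi-type normal form for antisymmetric matrices its nonzero singular values are equal --- call them both $\mu$. Then $2\mu^2 = \mathrm{tr}[A^\dagger A] = 1$ gives $\mu^2 = 1/2$, so $\rho_1 = AA^\dagger$ has eigenvalues $\{1/2, 1/2, 0\}$, whence $\mathrm{tr}[\rho_1^2] = 2\cdot(1/2)^2 = 1/2$. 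Third, substitute into (\ref{concurrence}): $\mathrm{C}(|\psi_{3\times3}^{(asym)}\rangle) = \sqrt{2(1 - 1/2)} = 1$. The main obstacle is establishing cleanly that a nonzero $3\times3$ complex antisymmetric matrix always has two equal nonzero singular values and one zero singular value; I would either cite the antisymmetric (skew) singular value decomposition / Hua's theorem, or argue directly by noting $\det A = 0$ for odd-dimensional antisymmetric $A$ (so $0$ is a singular value) and that the Pfaffian-type identity forces the two $2\times2$ minors to pair up, giving $A A^\dagger$ a repeated eigenvalue --- equivalently one checks that the characteristic polynomial of $AA^\dagger$ has the form $\lambda(\lambda - \mu^2)^2$ by a short explicit computation with the cofactors of the antisymmetric $A$.
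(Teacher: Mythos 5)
Your final argument (the ``Summary of the steps'' paragraph) is correct and reaches the right conclusion, but it takes a genuinely different route from the paper. The paper simply computes the reduced density operator $\rho_j$ of $|\psi^{(asym)}_{3\times3}\rangle$ explicitly in the computational basis (its Appendix A), observes that it is \emph{not} diagonal, and evaluates the purity directly, finding $\mathrm{tr}[\rho_j^2]=\tfrac12\bigl(|\alpha_{12}|^2+|\alpha_{13}|^2+|\alpha_{23}|^2\bigr)^2=\tfrac12$, whence $\mathrm{C}=1$. You instead encode the state as a $3\times3$ complex skew-symmetric coefficient matrix $A$ with $\mathrm{tr}[A^\dagger A]=1$, identify $\rho_1=AA^\dagger$, and invoke the skew-symmetric normal form (Youla/Hua): a nonzero odd-dimensional antisymmetric matrix has even rank and doubly degenerate nonzero singular values, forcing the Schmidt spectrum $\{\tfrac12,\tfrac12,0\}$ and hence $\mathrm{tr}[\rho_1^2]=\tfrac12$. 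Your route is more structural and buys more: it shows that \emph{every} antisymmetric pure two-qutrit state is locally unitarily equivalent to a fixed rank-two ``singlet-like'' state, which explains \emph{why} the concurrence is constant on this family, and it generalizes immediately to antisymmetric states in higher odd dimensions; the price is that you must cite or prove the skew-symmetric SVD, whereas the paper's brute-force purity computation is self-contained. One caution: the body of your proposal contains several abandoned false starts (the diagonal-$\rho_1$ ansatz giving a non-constant answer, and the miscalibrated $\mu^2=1/4$ leading to $\sqrt{7/4}$); these are harmless since you correct them, but only the final summary constitutes the proof and should be what you keep.
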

\begin{proof}
We find that, for every pure state of the form (\ref{anti}), the reduced states $\rho_j, j=1,2$, satisfy the relation\footnote{For the matrix representation of the density operator $|\psi^{(asym)}_{3\times 3}\rangle\langle\psi^{(asym)}_{3\times 3}|$ in the computational basis of $\mathbb{C}^3$, see Appendix A.}
\begin{equation}
\mathrm{tr}\left[  \rho_{j}^{2}\right] =\frac{1}{2}\left(  |\alpha_{12}%
|^{2}+|\alpha_{13}|^{2}+|\alpha_{23}|^{2}\right) ^{2}= 1/2\ ,
\end{equation}
so that by (\ref{concurrence})
\begin{equation}
\mathrm{C}(|\psi_{3\times3}^{(asym)}\rangle)=\sqrt{2(1- \mathrm{tr}\left[ \rho_{j}%
^{2}\right]) }= 1\ .
\end{equation} \end{proof}

By Proposition \ref{proposition2}, all pure two-qutrit states of the form (\ref{anti}) are
nonseparable, moreover, have the same value of the concurrence which is equal to one. 

Comparing this result with the values (\ref{gen_ant}) of the spin-$1$ CHSH parameter for these states, we find that, despite the same degree of entanglement for all pure states of the form (\ref{anti}),
the spin-$1$ CHSH parameter of these states varies in $\left[ \frac{1}{2},1\right] $.

For the concurrence of a pure two-qutrit state $|\psi_{3\times3}%
^{(sym)}\rangle\langle\psi_{3\times3}^{(sym)}|$ with the symmetric vector
$|\psi_{3\times3}^{(sym)}\rangle$ of the form (\ref{symm_qutrits}), we have the following result.

\begin{proposition}
\label{proposition3}
For every pure two-qutrit state with vector
$|\psi_{3\times3}^{(sym)}\rangle$ of the form (\ref{symm_qutrits}) its concurrence is equal to
\begin{equation}
\label{conc_sym}\mathrm{C}(|\psi_{3\times3}^{(sym)}\rangle)=\sqrt
{2(1-|\alpha_{11}|^{4}-|\alpha_{22}|^{4}-|\alpha_{33}|^{4})}\ .
\end{equation}
\end{proposition}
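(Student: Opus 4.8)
The plan is to read off the reduced state $\rho_1$ directly from the definition and then substitute $\mathrm{tr}[\rho_1^2]$ into the concurrence formula (\ref{concurrence}). The point is that the representation (\ref{symm_qutrits}), namely $|\psi_{3\times3}^{(sym)}\rangle=\alpha_{11}|11\rangle+\alpha_{22}|22\rangle+\alpha_{33}|33\rangle$, is already a Schmidt decomposition: its three terms are built from the mutually orthogonal basis vectors $|1\rangle,|2\rangle,|3\rangle$ on each tensor factor (the phases of the $\alpha_{ii}$ may be absorbed into the local basis vectors). Hence no diagonalization is needed.

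First I would compute $|\psi_{3\times3}^{(sym)}\rangle\langle\psi_{3\times3}^{(sym)}|=\sum_{i,j}\alpha_{ii}\alpha_{jj}^{*}\,|ii\rangle\langle jj|$ and take the partial trace over the second qutrit, using $\mathrm{tr}_2\big(|ii\rangle\langle jj|\big)=\delta_{ij}\,|i\rangle\langle i|$, which yields the diagonal operator
\[
\rho_1=|\alpha_{11}|^{2}\,|1\rangle\langle1|+|\alpha_{22}|^{2}\,|2\rangle\langle2|+|\alpha_{33}|^{2}\,|3\rangle\langle3|,
\]
and by the Schmidt theorem $\rho_2$ has the same spectrum, so $\mathrm{tr}[\rho_1^2]=\mathrm{tr}[\rho_2^2]$. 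Squaring and tracing gives $\mathrm{tr}[\rho_{j}^{2}]=|\alpha_{11}|^{4}+|\alpha_{22}|^{4}+|\alpha_{33}|^{4}$, and inserting this into (\ref{concurrence}) produces (\ref{conc_sym}) at once. If a more explicit form is desired, the normalization $|\alpha_{11}|^{2}+|\alpha_{22}|^{2}+|\alpha_{33}|^{2}=1$ lets one rewrite the radicand as $2\big(|\alpha_{11}|^{2}|\alpha_{22}|^{2}+|\alpha_{11}|^{2}|\alpha_{33}|^{2}+|\alpha_{22}|^{2}|\alpha_{33}|^{2}\big)$.

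There is essentially no obstacle; the only subtlety worth flagging is the observation that (\ref{symm_qutrits}) is already in Schmidt form, so one need not bother diagonalizing $\rho_j$. As a sanity check, the resulting formula gives $\mathrm{C}=0$ precisely when two of the $\alpha_{ii}$ vanish, consistent with the separability remark following (\ref{symm_qutrits}), and it returns $\mathrm{C}=2/\sqrt{3}$ for the GHZ choice $\alpha_{11}=\alpha_{22}=\alpha_{33}=1/\sqrt{3}$, matching the maximally entangled value recalled before Proposition \ref{proposition2}.
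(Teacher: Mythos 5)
Your proposal is correct and follows essentially the same route as the paper: compute the reduced state $\rho_j=\sum_i|\alpha_{ii}|^2|i\rangle\langle i|$ by partial trace, evaluate $\mathrm{tr}[\rho_j^2]$, and substitute into (\ref{concurrence}). The extra remarks (that (\ref{symm_qutrits}) is already a Schmidt decomposition, and the consistency checks against the GHZ and separable cases) are sound but not needed beyond what the paper does.
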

\begin{proof}
    The reduced states $\rho_j,\ j=1,2,$ of $
|\psi_{3\times3}^{(sym)}\rangle\langle\psi_{3\times3}^{(sym)}| $ have the form
\begin{align}
\rho_{j} 
& = |\alpha_{11}|^{2} |1\rangle\langle 1 | +|\alpha_{22}|^{2} |2\rangle
\langle 2 |+|\alpha_{33}|^{2} |3\rangle\langle 3 | \ .
\end{align}
This expression and relation (\ref{concurrence}) imply
\begin{equation}
\mathrm{C}(|\psi_{3\times3}^{(sym)}\rangle)=\sqrt{2(1- \mathrm{tr}\left[ \rho_{j}%
^{2}\right] )} =\sqrt{2\left( 1-|\alpha_{11}|^{4}-|\alpha_{22}|^{4}%
-|\alpha_{33}|^{4}\right) }\ .
\end{equation}
\end{proof}

Proposition \ref{proposition3} allows us to compare the values of the concurrence (\ref{conc_sym}) and the spin-$1$ CHSH parameter (\ref{gamma_sym}) of a pure two-qutrit symmetric state of the form (\ref{symm_qutrits}). 

For example, in case of the two-qutrit GHZ state $\rho_{3\times3}^{(ghz)}=|\psi_{3\times3}^{(ghz)}\rangle \langle \psi_{3\times3}^{(ghz)}|$, where all three coefficients in (\ref{symm_qutrits}) are equal to  $\alpha_{ii}=1/\sqrt
{3}$, $i=1,2,3$,  the concurrence (\ref{conc_sym}) attains the maximal value 
$\mathrm{C}(|\psi_{3\times3}^{(ghz)}\rangle)=2/\sqrt{3}$ among all two-qutrit states, while the spin-$1$ CHSH parameter of this state is equal, by Eq. (\ref{ghz_gamma}), to $\gamma_{s=1}(|\psi_{3\times3}^{(ghz)}\rangle
)=\sqrt{\frac{8}{9}}<1$. However, for any separable pure symmetric two-qutrit state $|\psi_{3\times3}^{(sep)}\rangle$, where the concurrence (\ref{conc_sym}) vanishes, let with the nonzero component $\alpha_{11}=1$, 
 the spin-$1$ 
CHSH parameter $\gamma_{s=1}(|\psi_{3\times3}^{(sep)}\rangle)=1$.

This, in particular, implies that, under local Alice and Bob spin-$1$ measurements, the absolute value of the CHSH expectation (\ref{1_6}) in a maximally entangled two-qutrit state is less than that for a separable pure two-qutrit state.

As an example, let us consider the following specific families of pure symmetric qutrit states of the form (\ref{symm_qutrits}).

\emph{Example 1.}
Consider the family  of pure states $|\psi_{3\times 3}^{(1)}(t)\rangle$ of the form (\ref{symm_qutrits}) with coefficients 
\begin{equation}
\label{ex1}
    \alpha_{11}(t)=\frac{1-t}{\sqrt{1-2t+2t^2}},\ \ \alpha_{22}(t)=0,\ \ \ \alpha_{33}(t)=\frac{t}{\sqrt{1-2t+2t^2}}\ ,
\end{equation}
where parameter $t\in \left[0,1\right]$. 
By (\ref{ex1}) and (\ref{conc_sym}) the concurrence of $|\psi_{3\times 3}^{(1)}(t)\rangle$ is given by
\begin{equation}
    \mathrm{C}(|\psi_{3\times 3}^{(1)}(t)\rangle)=\frac{2t(1-t)}{1-2t(1-t)}\ .
\end{equation}
Therefore, the pure state $|\psi_{3\times 3}^{(1)}(t)\rangle$ is separable for $t=0,1$, nonseparable for all $t\in\left(0,1\right)$,  its concurrence reaches its (local) maximum at $t=1/2$, as depicted in Fig. 1.

The spin-$1$ CHSH parameter (\ref{1_11}) of the  pure state (\ref{ex1}) can be found by Eq. (\ref{gamma_sym}) and is equal, since  $\alpha_{22}(t)=0$,  to $\gamma_{s=1}(|\psi_{3 \times 3}^{(1)}(t)\rangle)= |\alpha_{11}(t)|^2+|\alpha_{33}(t)|^2$, therefore,
\begin{equation}
    \gamma_{s=1}(|\psi_{3 \times 3}^{(1)}(t)\rangle)=\left(\frac{1-t}{\sqrt{1-2t+2t^2}}\right)^2+\left(\frac{t}{\sqrt{1-2t+2t^2}}\right)^2=1\ ,\   \forall \ t\in\left[0,1\right]\ .
\end{equation}

\begin{figure}[h!]
    \centering
    \includegraphics[width=0.5\textwidth]{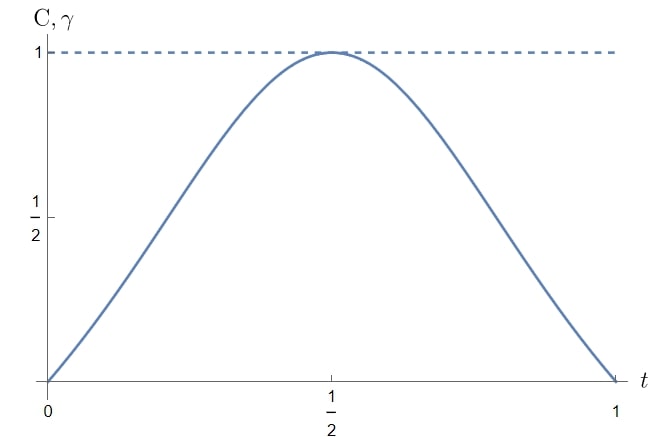}
    \caption{The concurrence (solid) and the spin-$1$ CHSH parameter (dashed) of the pure two-qutrit state (\ref{ex1}) for $t\in\left[0,1\right]$.}
    \label{fig1}
\end{figure}

Here, we have an example of a two-qutrit state with a variable entanglement depending on a parameter $ t\in\left[0,1\right]$ and a constant value of the spin-$1$ CHSH parameter. This phenomenon had already occurred for the Horodecki mixed states (Section \ref{mixed_twoqutrit_states}), but this example shows that it may also occur in case of pure states.

\emph{Example 2.}
Consider the family of pure states $|\psi_{3\times 3}^{(2)}(t)\rangle$ of the form (\ref{symm_qutrits}) with coefficients given by
\begin{equation}
\label{ex2}
    \alpha_{11}(t)=\frac{1-t/2}{\sqrt{1-t+(4/3)t^2}},\ \ \alpha_{22}(t)=\frac{t/2}{\sqrt{1-t+(4/3)t^2}},\ \ \ \alpha_{33}(t)=\frac{t/2}{\sqrt{1-t+(4/3)t^2}}\ ,
\end{equation}
where parameter   $ t\in\left[0,1\right]$. 

By (\ref{ex2}) and (\ref{conc_sym}) the concurrence of this state is equal to
\begin{equation}
    \mathrm{C}(|\psi_{3\times 3}^{(2)}(t)\rangle)=\frac{2t\sqrt{3t^2-8t+8}}{3t^2-4t+4}\ ,
\end{equation}
and is monotonically increasing as shown in Fig. 2. For $t=0$, this state is separable and, for $t=1$, it is maximally entangled.

By (\ref{gamma_sym}), the spin-$1$ CHSH parameter of the  pure symmetric state (\ref{ex2}) is given as  
\begin{equation}
    \gamma_{s=1}(|\psi_{3\times 3}^{(2)}(t)\rangle)=\frac{3}{2} \sqrt{\frac{t^4-4 t^3+9 t^2-8 t+4}{( 4 t^2-3t+3)^2}}\ ,
\end{equation}
for all $t\in \left[0,1\right]$ and is represented in Fig. 2 (dashed curve).
\begin{figure}[h!]
    \centering
    \includegraphics[width=0.5\textwidth]{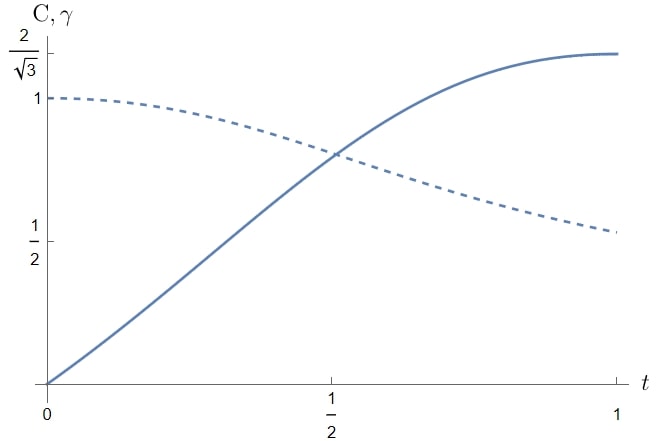}
    \caption{The concurrence  (solid) and spin-$1$ CHSH parameter (dashed) of the pure two-qutrit state (\ref{ex2})   for $t\in\left[0,1\right]$.}
    \label{fig2}
\end{figure}

\noindent In this case, the spin-$1$ CHSH parameter is not constant, it decreases monotonically from $1$ at $t=0$ to $\frac{2\sqrt{2}}{3}$ at $t=1$, as shown in Fig. 2.

In this example, for values of the parameter $t$ in the interval $\left[0,1\right]$, we presented a family of pure states of the form (\ref{symm_qutrits}) for which the entanglement monotonically increases though its spin-$1$ CHSH parameter decreases.

\section{Conjecture}
  
As it is shown analytically in Section \ref{spin_correlation_matrix_section}, under local Alice and Bob spin-$1$ measurements, no one of the nonseparable pure states in families (\ref{anti}) and (\ref{symm_qutrits}) violates the CHSH inequality. The entanglement of these pure states is studied in Propositions \ref{proposition2} and \ref{proposition3} of  Section \ref{entanglement_vs_nonviolation_chsh}. 

Furthermore, within testing of $1,000,000$ randomly generated
nonseparable pure two-qutrit states\footnote{This numerical study has been performed by using Wolfram Mathematica 13.1, see in Appendix B.}, we also have not found a nonseparable pure  two-qutrit state that violates the CHSH inequality under local Alice and Bob spin-$1$ measurements. 

These numerical results and the analytical results in Section \ref{spin_correlation_matrix_section} lead us to
the following conjecture.

\emph{
 Under local Alice and Bob spin-$1$ measurements in an arbitrary nonseparable pure two-qutrit state, the
CHSH inequality is not violated.
}

Note that by Corollary \ref{corollary2} the spin-$1$ CHSH parameter for every separable state is not greater than $1$.
This and the above Conjecture imply that, for every pure two-qutrit state $|\psi_{3\times 3} \rangle\langle \psi_{3\times 3}|$, separable or nonseparable, the spin-$1$ CHSH parameter
\begin{equation}
\gamma_{s=1}(|\psi_{3\times 3}\rangle )\leq 1\ .\label{90}
\end{equation}

Recall that by the spectral theorem every 
 mixed two-qutrit state admits the convex form decomposition 
\begin{equation}
\label{spectral}
    \rho_{3\times 3}^{(mix)}=\sum_k \lambda_k |\phi_{3\times 3}^{(k)} \rangle\langle \phi_{3\times 3}^{(k)}|\ ,\ \lambda_k\geq0\ ,  \sum_k \lambda_k=1,
\end{equation}
where each $\lambda_k$ is an eigenvalue of $\rho_{3\times 3}^{(mix)}$ and $|\phi_{3\times3}^{(k)}\rangle$ is the corresponding eigenvector.

From the convex property (\ref{1_12}) and relations (\ref{90}), (\ref{spectral}) it follows that, for a mixed two-qutrit state, the spin-$1$ CHSH parameter is also not more than one:
\begin{equation}
\label{76}
    \gamma_{s=1}(\rho^{(mix)}_{3\times 3})\leq\sum_{k}\lambda_{k}\text{ }\gamma_{s=1}(|\phi_{3\times 3}^{(k)}\rangle )\leq 1\ .
\end{equation}
 so that by Corollary \ref{corollary1} every mixed two-qutrit state does not violate the CHSH inequality.  
 
Summing up -- based on the above Conjecture, we come to the following statement.

\vspace*{0.1cm}

\emph{Under local Alice and Bob spin-$1$ measurements in any nonseparable two-qutrit state, pure or mixed, the
CHSH inequality is not violated.}

\section{Conclusion}
\label{conclusion}

In the present article, based on the general analytical expression (Proposition \ref{proposition1}) for the maximum  of the CHSH expectation under spin-$1$ measurements, we have analyzed whether or not, under spin-$1$ measurements, the CHSH inequality is violated. 

For a variety of nonseparable two-qutrit states, pure and mixed, we have found analytically (Section 3) 
the values of the spin-$1$ CHSH parameter (\ref{1_11}) specifying violation or nonviolation of the CHSH inequality under local Alice and Bob spin-$1$ measurements. By complementing these analytical results with the numerical study (Appendix B) on the values of this parameter for $1,000,000$ randomly generated pure nonseparable two-qutrit states and taking also into account the spectral decomposition of each mixed state, we put forward the Conjecture (Section 5) that, under local Alice and Bob spin-$1$ measurements in any  nonseparable two-qutrit state, pure and mixed, the CHSH inequality is not violated.

Furthermore, we have also derived in Propositions \ref{proposition2} and \ref{proposition3} (Section 4) the explicit expressions for the values of the concurrence for  pure two-qutrit states in families (\ref{anti}) and \eqref{symm_qutrits} and compared them with the values of the spin-$1$ CHSH parameter for these states. We have found that, in
contrast to spin-$\frac{1}{2}$ measurements, where the spin-$\frac{1}{2}$ CHSH parameter (\ref{0_5}) of a pure
two-qubit state is monotonically increasing with a growth of its concurrence, for a
pure two-qutrit state, this is not the case. In particular, for the two-qutrit GHZ state, which is maximally entangled, the spin-$1$ CHSH parameter (\ref{1_11}) is equal to $\sqrt{\frac{8}{9}}$, while, for some separable pure two-qutrit states, this parameter can be equal to unity. Also,  for each of the Horodecki two-qutrit states \eqref{hor}, the spin-$1$ CHSH parameter is equal by \eqref{gamma_hor} to $4\sqrt{2}/21 < 1$ regardless of the entanglement type of this mixed state according to the classification in \cite{Hor.Hor.Hor:99}.

\section*{Acknowledgment}

The study by E. R. Loubenets in Sections \ref{introduction} and \ref{preliminaries} of this work is supported by
the Russian Science Foundation under Grant No 24-11-00145 and
performed at the Steklov Mathematical Institute of the Russian Academy of
Sciences. The study by E. R. Loubenets and L. Hanotel in Sections \ref{spin_correlation_matrix_section}--\ref{conclusion} was conducted at the HSE University.

\section*{Appendix A}
By taking the partial trace of $|\psi_{3\times3}^{(asym)}\rangle\langle\psi_{3\times3}^{(asym)}|$ over the first space in
$\mathbb{C}^{3}\otimes\mathbb{C}^{3}$, we find the following matrix
representation of the reduced states $\rho_{j}$ (that coincide for $j=1,2$) of $|\psi_{3\times3}^{(asym)}\rangle\langle\psi_{3\times
3}^{(asym)}|$ in the
computational basis in $\mathbb{C}^{3}$.

\begin{equation}
\rho_{j}  = \frac{1}{2}
\begin{pmatrix}
|\alpha_{12}|^{2}+|\alpha_{13}|^{2} & \alpha_{13}\alpha_{23}^{*} &
-\alpha_{12}\alpha_{23}^{*}\\
\alpha_{13}^{*}\alpha_{23} & |\alpha_{12}|^{2}+|\alpha_{23}|^{2} & \alpha
_{12}\alpha_{13}^{*}\\
-\alpha_{12}^{*}\alpha_{23} & \alpha_{12}^{*}\alpha_{13} & |\alpha_{13}%
|^{2}+|\alpha_{23}|^{2}\\
&  &
\end{pmatrix}
\ .
\end{equation}

\section*{Appendix B}
In this Appendix, we present the Mathematica 13.1 code for the numerical study discussed in Section 5. The program consists of the following steps: (i) generating randomly a  unit two-qutrit vector $|\psi_{3\times 3}\rangle \in\mathbb{C}^3\otimes \mathbb{C}^3$; (ii) computing for this state  the spin-$1$ correlation matrix (\ref{1_8}) and its singular values; (iii)  calculating via (\ref{1_11}) the spin-$1$ CHSH parameter $\gamma_{s=1}(|\psi_{3\times 3}\rangle )$ of this state and its concurrence. 

Within $1,000,000$ numerical trials, we have not experienced a case where the parameter $\gamma_{s=1}(|\psi_{3\times 3}\rangle )>1$.

\begin{lstlisting}[language=Mathematica,caption={Numerical study of Section 5}]
Computation of the spin CHSH parameter for random pure two-qutrit states
(*General definitions*)
ccS=Complex[a_,b_]:>Complex[a,-b]; (*Complex conjugate substitution*)
(*Spin-1 Operators*)
s[1]={{0,1,0},{1,0,1},{0,1,0}}/Sqrt[2]; 
s[2]=-I{{0,1,0},{-1,0,1},{0,-1,0}}/Sqrt[2];
s[3]={{1,0,0},{0,0,0},{0,0,-1}};
(*Other useful functions*)
nC[\[Psi]L_]:=\[Psi]L/Sqrt[\[Psi]L.(\[Psi]L/.ccS)] (*This function normalizes every vector \[Psi]L*)
vectorToDensityMatrix[\[Psi]L_]:=KroneckerProduct[\[Psi]L,(\[Psi]L/.ccS)] (*It finds the density operator for the pure state vector \[Psi]L*)
zMatrix[\[Rho]_]:=Table[Tr[\[Rho].KroneckerProduct[s[i],s[j]]],{i,1,3},{j,1,3}] (*It computes the spin-1 correlation matrix of a state \[Rho] *)
The following is the main function, which: (i) takes a randomly generated pure two-qutrit quantum state "\[Psi]L" by randomly generating independent complex numbers for its entries; (ii) normalizes this quantum state; (iii) computes the spin correlation matrix "zM" (17); (iv) computes its singular values "eigvL"  (which are automatically sorted in decreasing order in the case of numerical data) and then the spin CHSH parameter "\[Gamma]" is computed.
main[]:=Module[
{\[Psi]L={RandomComplex[],RandomComplex[],RandomComplex[],RandomComplex[],RandomComplex[],RandomComplex[],RandomComplex[],RandomComplex[],RandomComplex[]},zM,eigvL,\[Gamma]},
zM=zMatrix[vectorToDensityMatrix[nC[\[Psi]L]]]//FullSimplify;
eigvL=Transpose[zM].zM//FullSimplify//Eigenvalues//FullSimplify;
\[Gamma]=(Sqrt[eigvL[[1]]+eigvL[[2]]])
]
This function after 1,000,000 iterations does not find any violation of the CHSH inequality under the conditions described in the article. 	
q=1000000; (*number of iterations*)
eL=ConstantArray[0,q]; (*empty list to store the results of the iterations of the main function*)
For[j=1,j<q+1,j++,eL[[j]]=main[]] (*iteration*)
Max[eL] (*Maximal value of the CHSH parameter for q iterations*) 
0.993671
Computation of concurrence (to generate data for histogram in Fig. 3)
stateToDensityM[\[Psi]_] := KroneckerProduct[\[Psi], \[Psi] /. ccS]
v1=KroneckerProduct[IdentityMatrix[3],{1,0,0}];
v2=KroneckerProduct[IdentityMatrix[3],{0,1,0}];
v3=KroneckerProduct[IdentityMatrix[3],{0,0,1}];
Computation of the reduced density matrix
red[\[Rho]_] := v1.\[Rho].Transpose[v1]+v2.\[Rho].Transpose[v2]+v3.\[Rho].Transpose[v3]
Concurrence
c[\[Psi]_]:=Sqrt[2(1-Tr[red[stateToDensityM[\[Psi]]].red[stateToDensityM[\[Psi]]]])]
eL2=ConstantArray[0,q];  (*empty list*)
For[i= 1,i<q+1,i++,eL2[[i]]=Sqrt[2-2Tr[red[stateToDensityM[eL[[i]][[1]]]].red[stateToDensityM[eL[[i]][[1]]]]]]] (*data of histogram in Fig. 3*)
  \end{lstlisting}

  \vspace*{0.5cm}

The above states are all entangled and in Fig. 3 a histogram of the number of states for a given interval of values of concurrence is presented. 

\begin{figure}[h!]
    \centering
    \includegraphics[width=0.6\textwidth]{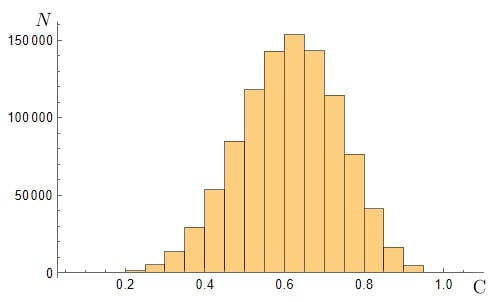}
    \caption{Number of states $N$ among the $1,000,000$ states considered in the sample in each interval of values of the concurrence $\mathrm{C}$.}
    \label{fig3}
\end{figure}

A sample of fifty of these numerical results is shown in Table 1, indicating a randomly generated pure two-qutrit state and its corresponding value of the spin-$1$ CHSH parameter --  according to the results of the program presented above. A randomly generated pure two-qutrit state admits the decomposition  
\begin{equation}
    |\psi_{3 \times 3 }\rangle=\sum_{i,j=1}^3 \psi_{ij}|i\rangle\otimes |j\rangle, \ \psi_{ij}\in\mathbb{C}\ ,
\label{r}
\end{equation}
  and is specified below via the list of its coefficients in (\ref{r}):
  \begin{equation}
      \left\lbrace 
      \psi_{11},\psi_{12},\psi_{13},
      \psi_{21},\psi_{22},\psi_{23},
      \psi_{31},\psi_{32},\psi_{33}
      \right\rbrace\ ,
  \end{equation}
which satisfy the normalization condition $\sum_{i,j=1}^3 |\psi_{ij} |^2=1\ .$

\begin{table}[h!]
{\tiny
\begin{tabular}{l|c}
{\footnotesize $|\psi_{3\times 3}\rangle$} & {\footnotesize$\gamma_{s=1}(|\psi_{3\times 3}\rangle)$ }\\
\midrule
 \{0.23\, +0.02 i,0.21\, +0.26 i,0.26\, +0.28 i,0.38\, +0.34 i,0.16\, +0.12 i,0.02\, +0.22 i,0.05\, +0.39 i,0.22\, +0.35 i,0.\, +0.11 i\} & 0.83 \\
 \{0.29\, +0.26 i,0.33\, +0.25 i,0.22\, +0.1 i,0.03\, +0.32 i,0.31\, +0.26 i,0.32\, +0.22 i,0.19\, +0.3 i,0.04\, +0.17 i,0.09\, +0.22 i\} & 0.91 \\
 \{0.08\, +0.01 i,0.1\, +0.33 i,0.03\, +0.17 i,0.03\, +0.35 i,0.32\, +0.33 i,0.19\, +0.11 i,0.28\, +0.22 i,0.32\, +0.26 i,0.11\, +0.39 i\} & 0.86 \\
 \{0.18\, +0.13 i,0.09\, +0.12 i,0.3\, +0.43 i,0.08\, +0.33 i,0.35\, +0.07 i,0.21\, +0.17 i,0.22\, +0.16 i,0.3\, +0.25 i,0.29\, +0.14 i\} & 0.79 \\
 \{0.22\, +0.28 i,0.33\, +0.23 i,0.\, +0.09 i,0.04\, +0.27 i,0.23\, +0.26 i,0.42\, +0.29 i,0.1\, +0.35 i,0.12\, +0.12 i,0.26\, +0.15 i\} & 0.82 \\
 \{0.31\, +0.24 i,0.15\, +0.24 i,0.16\, +0.07 i,0.32\, +0.16 i,0.06\, +0.35 i,0.01\, +0.36 i,0.2\, +0.11 i,0.4\, +0.15 i,0.09\, +0.32 i\} & 0.83 \\
 \{0.28\, +0.05 i,0.16\, +0.09 i,0.04\, +0.36 i,0.01\, +0.32 i,0.33\, +0.01 i,0.2\, +0.3 i,0.03\, +0.38 i,0.12\, +0.36 i,0.2\, +0.28 i\} & 0.71 \\
 \{0.39\, +0.07 i,0.25\, +0.1 i,0.25\, +0.16 i,0.28\, +0.28 i,0.36\, +0.21 i,0.04\, +0.18 i,0.33\, +0.15 i,0.34\, +0.14 i,0.09\, +0.2 i\} & 0.86 \\
 \{0.02\, +0.34 i,0.2\, +0.12 i,0.16\, +0.2 i,0.\, +0.14 i,0.39\, +0.19 i,0.33\, +0.25 i,0.09\, +0.26 i,0.21\, +0.24 i,0.31\, +0.33 i\} & 0.83 \\
 \{0.1\, +0.07 i,0.08\, +0.38 i,0.04\, +0.15 i,0.4\, +0.14 i,0.19\, +0.34 i,0.22\, +0.07 i,0.1\, +0.25 i,0.37\, +0.35 i,0.16\, +0.25 i\} & 0.81 \\
 \{0.21\, +0.3 i,0.25\, +0.11 i,0.22,0.24\, +0.19 i,0.1\, +0.34 i,0.27\, +0.35 i,0.07\, +0.19 i,0.19\, +0.01 i,0.39\, +0.32 i\} & 0.75 \\
 \{0.37\, +0.24 i,0.03\, +0.01 i,0.1\, +0.25 i,0.39\, +0.4 i,0.42\, +0.02 i,0.06\, +0.07 i,0.05\, +0.19 i,0.15\, +0.19 i,0.1\, +0.36 i\} & 0.69 \\
 \{0.23\, +0.01 i,0.37\, +0.14 i,0.22\, +0.29 i,0.33\, +0.27 i,0.15\, +0.24 i,0.15\, +0.33 i,0.09\, +0.02 i,0.3\, +0.32 i,0.25\} & 0.86 \\
 \{0.32\, +0.38 i,0.25\, +0.25 i,0.09\, +0.29 i,0.02\, +0.22 i,0.31\, +0.07 i,0.\, +0.2 i,0.04\, +0.19 i,0.29\, +0.21 i,0.4\, +0.15 i\} & 0.67 \\
 \{0.06\, +0.36 i,0.38\, +0.14 i,0.32\, +0.16 i,0.07\, +0.38 i,0.14,0.09\, +0.22 i,0.04\, +0.43 i,0.15\, +0.01 i,0.29\, +0.22 i\} & 0.49 \\
 \{0.34\, +0.11 i,0.25\, +0.09 i,0.39\, +0.4 i,0.05\, +0.11 i,0.3\, +0.08 i,0.2\, +0.06 i,0.19\, +0.27 i,0.04\, +0.26 i,0.34\, +0.21 i\} & 0.66 \\
 \{0.38\, +0.33 i,0.1\, +0.17 i,0.22\, +0.25 i,0.38\, +0.25 i,0.14\, +0.06 i,0.\, +0.06 i,0.37\, +0.03 i,0.12\, +0.17 i,0.33\, +0.26 i\} & 0.7 \\
 \{0.06\, +0.24 i,0.24\, +0.19 i,0.08\, +0.39 i,0.02\, +0.15 i,0.41\, +0.15 i,0.38\, +0.22 i,0.25\, +0.05 i,0.27\, +0.07 i,0.32\, +0.19 i\} & 0.85 \\
 \{0.15\, +0.28 i,0.08\, +0.03 i,0.26\, +0.26 i,0.11\, +0.35 i,0.37\, +0.35 i,0.02\, +0.05 i,0.46\, +0.12 i,0.06\, +0.03 i,0.24\, +0.29 i\} & 0.85 \\
 \{0.2\, +0.16 i,0.37\, +0.35 i,0.27\, +0.14 i,0.05\, +0.38 i,0.19\, +0.11 i,0.1\, +0.07 i,0.05\, +0.29 i,0.19\, +0.4 i,0.23\, +0.19 i\} & 0.74 \\
 \{0.4\, +0.27 i,0.29\, +0.01 i,0.11\, +0.16 i,0.27\, +0.06 i,0.36\, +0.22 i,0.26\, +0.05 i,0.08\, +0.26 i,0.26\, +0.13 i,0.22\, +0.34 i\} & 0.89 \\
 \{0.21\, +0.35 i,0.4\, +0.08 i,0.2\, +0.28 i,0.08\, +0.22 i,0.44\, +0.27 i,0.08\, +0.03 i,0.17\, +0.27 i,0.31\, +0.09 i,0.07\, +0.03 i\} & 0.79 \\
 \{0.34\, +0.26 i,0.21\, +0.33 i,0.21\, +0.09 i,0.2\, +0.35 i,0.13\, +0.24 i,0.36\, +0.11 i,0.28\, +0.06 i,0.22\, +0.09 i,0.23\, +0.22 i\} & 0.77 \\
 \{0.21\, +0.23 i,0.25\, +0.36 i,0.16\, +0.26 i,0.18\, +0.05 i,0.02\, +0.03 i,0.32\, +0.24 i,0.27\, +0.22 i,0.11\, +0.31 i,0.38\, +0.2 i\} & 0.52 \\
 \{0.21\, +0.15 i,0.01\, +0.3 i,0.35\, +0.18 i,0.24\, +0.05 i,0.2\, +0.33 i,0.2\, +0.22 i,0.27\, +0.23 i,0.32\, +0.29 i,0.11\, +0.25 i\} & 0.83 \\
 \{0.11\, +0.35 i,0.21\, +0.3 i,0.14\, +0.35 i,0.05\, +0.14 i,0.01\, +0.25 i,0.19\, +0.13 i,0.15\, +0.33 i,0.27\, +0.29 i,0.24\, +0.34 i\} & 0.64 \\
 \{0.17\, +0.11 i,0.22\, +0.04 i,0.1\, +0.19 i,0.08\, +0.38 i,0.33\, +0.2 i,0.37\, +0.26 i,0.39\, +0.1 i,0.07\, +0.25 i,0.35\, +0.07 i\} & 0.77 \\
 \{0.22\, +0.03 i,0.03\, +0.34 i,0.\, +0.04 i,0.01\, +0.01 i,0.14\, +0.35 i,0.04\, +0.33 i,0.27\, +0.35 i,0.35\, +0.27 i,0.3\, +0.32 i\} & 0.84 \\
 \{0.03\, +0.24 i,0.26\, +0.31 i,0.29\, +0.4 i,0.32\, +0.36 i,0.04\, +0.06 i,0.15\, +0.29 i,0.17,0.23\, +0.01 i,0.24\, +0.24 i\} & 0.65 \\
 \{0.11\, +0.37 i,0.21\, +0.1 i,0.1\, +0.37 i,0.14\, +0.06 i,0.22\, +0.33 i,0.09\, +0.12 i,0.35\, +0.02 i,0.31\, +0.27 i,0.37\, +0.15 i\} & 0.79 \\
 \{0.03\, +0.33 i,0.35\, +0.13 i,0.11\, +0.15 i,0.36\, +0.3 i,0.06\, +0.09 i,0.\, +0.36 i,0.09\, +0.33 i,0.16\, +0.03 i,0.3\, +0.35 i\} & 0.49 \\
 \{0.36\, +0.17 i,0.34\, +0.21 i,0.2\, +0.07 i,0.16\, +0.41 i,0.05\, +0.29 i,0.27\, +0.05 i,0.34\, +0.02 i,0.03\, +0.14 i,0.22\, +0.3 i\} & 0.65 \\
 \{0.16\, +0.18 i,0.34\, +0.15 i,0.06\, +0.27 i,0.22\, +0.37 i,0.08\, +0.22 i,0.26\, +0.24 i,0.38\, +0.17 i,0.4\, +0.13 i,0.08\, +0.04 i\} & 0.79 \\
 \{0.3\, +0.39 i,0.14\, +0.05 i,0.18\, +0.17 i,0.12\, +0.06 i,0.21\, +0.08 i,0.39\, +0.07 i,0.02\, +0.37 i,0.38\, +0.39 i,0.09\, +0.1 i\} & 0.64 \\
 \{0.36\, +0.14 i,0.19\, +0.09 i,0.06\, +0.16 i,0.31\, +0.26 i,0.21\, +0.14 i,0.2\, +0.31 i,0.06\, +0.4 i,0.04\, +0.36 i,0.24\, +0.25 i\} & 0.81 \\
 \{0.33\, +0.05 i,0.37\, +0.31 i,0.1\, +0.07 i,0.22\, +0.21 i,0.09\, +0.29 i,0.17\, +0.25 i,0.28\, +0.09 i,0.34\, +0.29 i,0.27\, +0.11 i\} & 0.77 \\
 \{0.04\, +0.03 i,0.02\, +0.38 i,0.32\, +0.37 i,0.28\, +0.31 i,0.02\, +0.04 i,0.21\, +0.13 i,0.28\, +0.28 i,0.22\, +0.28 i,0.12\, +0.29 i\} & 0.61 \\
 \{0.3\, +0.28 i,0.13\, +0.29 i,0.29\, +0.09 i,0.12\, +0.23 i,0.3\, +0.35 i,0.13\, +0.12 i,0.31\, +0.17 i,0.14\, +0.21 i,0.37\, +0.01 i\} & 0.85 \\
 \{0.36\, +0.05 i,0.26\, +0.11 i,0.11\, +0.28 i,0.06\, +0.22 i,0.19\, +0.31 i,0.3\, +0.1 i,0.17\, +0.37 i,0.06\, +0.27 i,0.31\, +0.28 i\} & 0.78 \\
 \{0.23\, +0.31 i,0.25\, +0.11 i,0.16\, +0.29 i,0.31\, +0.36 i,0.19\, +0.14 i,0.07\, +0.08 i,0.26\, +0.16 i,0.27\, +0.26 i,0.34\, +0.12 i\} & 0.78 \\
 \{0.21\, +0.08 i,0.31\, +0.28 i,0.17\, +0.06 i,0.27\, +0.21 i,0.38\, +0.1 i,0.02\, +0.11 i,0.35\, +0.17 i,0.27\, +0.29 i,0.15\, +0.34 i\} & 0.86 \\
 \{0.2\, +0.24 i,0.04,0.09\, +0.06 i,0.38\, +0.41 i,0.1\, +0.14 i,0.13\, +0.44 i,0.07\, +0.24 i,0.05\, +0.43 i,0.19\, +0.23 i\} & 0.73 \\
 \{0.35\, +0.01 i,0.4\, +0.18 i,0.2\, +0.01 i,0.15\, +0.25 i,0.26\, +0.31 i,0.42\, +0.13 i,0.31\, +0.13 i,0.25\, +0.04 i,0.06\, +0.14 i\} & 0.81 \\
 \{0.03\, +0.15 i,0.01\, +0.41 i,0.3\, +0.22 i,0.31\, +0.37 i,0.15\, +0.29 i,0.11\, +0.06 i,0.29\, +0.06 i,0.2\, +0.38 i,0.04\, +0.22 i\} & 0.84 \\
 \{0.14\, +0.16 i,0.21\, +0.38 i,0.01\, +0.39 i,0.08\, +0.22 i,0.32\, +0.32 i,0.18\, +0.01 i,0.19\, +0.27 i,0.13\, +0.38 i,0.22\, +0.06 i\} & 0.79 \\
 \{0.29\, +0.24 i,0.24\, +0.31 i,0.13\, +0.38 i,0.18\, +0.07 i,0.32\, +0.25 i,0.01\, +0.23 i,0.12\, +0.14 i,0.24\, +0.32 i,0.3\, +0.04 i\} & 0.84 \\
 \{0.19\, +0.32 i,0.2\, +0.38 i,0.33\, +0.15 i,0.2\, +0.23 i,0.31\, +0.05 i,0.38\, +0.21 i,0.13\, +0.17 i,0.18\, +0.11 i,0.08\, +0.27 i\} & 0.82 \\
 \{0.36\, +0.19 i,0.31\, +0.01 i,0.19\, +0.19 i,0.35\, +0.34 i,0.06\, +0.21 i,0.13\, +0.38 i,0.21\, +0.06 i,0.1\, +0.04 i,0.18\, +0.35 i\} & 0.57 \\
 \{0.28\, +0.26 i,0.04\, +0.19 i,0.03\, +0.31 i,0.29\, +0.19 i,0.22\, +0.32 i,0.16\, +0.36 i,0.08\, +0.3 i,0.14\, +0.2 i,0.36\, +0.07 i\} & 0.84 \\
 \{0.26\, +0.4 i,0.18\, +0.37 i,0.1\, +0.06 i,0.38\, +0.33 i,0.06\, +0.14 i,0.02\, +0.17 i,0.18\, +0.33 i,0.01\, +0.04 i,0.28\, +0.24 i\} & 0.79 \\
\end{tabular}}
\caption{Numerical examples of random pure two-qutrit states with complex coefficients (left column) and the values (right column) of the spin-$1$ CHSH parameter for these states. Due to space limitations, we present here all numerical results up to two decimal digits.}
\end{table}

\end{document}